\newtheorem{theorem}{Theorem}
\newtheorem{lemma}{Lemma}
\newtheorem{definition}{Definition}
\newtheorem{assumption}{Assumption}
\newtheorem{remark}{Remark}
\def\EE{{\mathbb E}}
\def\NN{{\mathbb N}}
\def\PP{{\mathbb P}}
\def\RR{{\mathbb R}}
\def\ONE{{\mathds{1}}}
\def\d{{\mathrm d}}
\def\ll{{\mathscr L}}
\def\tll{{\tilde{\mathscr L}}}
\def\mm{ { \mathcal M} }
\def\al{\alpha}
\def\bt{\beta}
\def\Del{\Delta}
\def\lmd{\lambda}
\def\Del{\Delta}
\def\bA{\bar{A}}
\def\ba{\bar{a}}
\def\bal{\bar{\alpha}}
\def\bB{\bar{B}}
\def\bdel{\bar{\delta}}
\def\bF{\bar{F}}
\def\bP{ \bar{P} }
\def\hDelt{\widehat{\Delta t}}
\def\oDt{\overline{\Delta t}}
\def\tDel{\tilde{\Delta}}
\def\tDelt{\widetilde{\Delta t}}
\def\uDt{\underline{\Delta t}}
\def\tV{\tilde{V}}
\title{\vspace{-15mm}%
	\fontsize{24pt}{10pt}\selectfont
	\textbf{
 Stability of cyber-physical systems of numerical methods for stochastic  differential equations: integrating the cyber and the physical of stochastic systems }
	}	
\author{ 
Lirong Huang  
\thanks{Guangzhou, China  (email:lrhuang@aliyun.com). The author was with School of Automation, Guangdong University of Technology, Guangzhou, China}
}
\date{}
\begin{document}

\maketitle
\thispagestyle{fancy}

\begin{abstract}

This paper presents the cyber-physical system (CPS) of a numerical method (the widely-used Euler-Maruyama method) and establishes a foundational theory of  the CPSs of numerical methods for stochastic differential equations (SDEs), which transforms the way we understand the relationship between the numerical method and the underlying dynamical system.  Unlike in the literature where they are treated as separate systems linked by inequalities, the CPS is a seamless integration of the SDE and the numerical method and we construct a new and general class of stochastic impulsive differential equations (SiDEs) that can serve as a canonic form of the CPSs of numerical methods. By the CPS approach, we show the equivalence and intrinsic relationship between the stability of the SDE and the stability of the numerical method using the Lyapunov stability theory we develop for our class of  SiDEs. Applying our established theory, we present the CPS Lyapunov inequality that is the necessary and sufficient condition for mean-square  stability of the CPS of the   Euler-Maruyama method for linear SDEs.
The proposed CPS and theory initiate the study of systems numerics and provoke many open and interesting problems for future work.

\end{abstract}

\section{Introduction}        \label{sec:introduction}

According to Newton's second law of motion, we  describe a mechanical system with differential equations. Usually, physiccal laws are expressed by means of differential equations, and so are the models of dynamical systems in many disciplines,  ranging from biology to finance. Such models play a central role in all scientific and engineering disciplines \cite{control2002,lee2015}. A model may serve many purposes. The value of a model depends on the  model fidelity,  where the model of a dynamical system is said to be have high fidelity if it accurately describes the properties of the system \cite{control2002,lee2015,lee2017}. Studying the model of high fidelity gives us insight into how the dynamical system will behave in the real world.  Generally, a dynamical system, ranging from the motion of  pollen particles  to the movement of  stock price, is subject to intrinsic and/or extrinsic noise in the real world  \cite{higham2008,huang2010PhD,mao2007book}.  Such randomness should/must be taken into account  by a model of high fidelity if it matters,  say, it affects   some  property of the  system that is of concern to the modelling.   If we allow for some noise in some coefficients of a differential equation, we often obtain a more realistic model of the situation that is able to describe  the fluctuations observed in the physical system.  This leads   to modelling with  stochastic differential equations (SDEs). The study of SDEs can be seen to have started from the classical paper
of Einstein that presented a mathematical connection between
microscopic Brownian motion of particles and the macroscopic diffusion
equation, and the interest in SDE has grown enormously in the last few decades \cite{arnold1974,khas2012,oksendal2000,sarkka2019,vankampen2007,wilkinson2012}.

Over the past a few decaseds, stochastic systems described by SDE have been intensively studied since stochastic modelling has come to play a significant role  in many branches of science and engineering \cite{astrom1970,huang2009,huang2010,huang2016,huang2022}. It is  hardly possible to solve an SDE analytically and have the exact solution of the SDE.  For practical purposes, numerical approximations to the exact solution are usually obtained, which, called the numerical solutions,  are  discrete-time stochastic processes produced by some numerical methods. Such numerical schemes, in the form of stochastic difference equations, are  the translations of the SDE into discretization.   Practically, computers are used to excute the numerical methods and generate the numerical solutions of the SDE,   from   which one could learn and/or infer some dynamical properties of the underlying  physical system \cite{astrom2014,higham2001,huang2016,stuart1996}. 

As is well known, whenever a computer is used
in measurement, computation, signal processing or control applications, the
data, signals and systems involved are naturally described as discrete-time
processes \cite{astrom1997,huang2015,huang2016,scotton2013}.  It is worth noting that the SDE is the physical model which represents our knowledge of the physical system and a numerical method is a cyber model which is a  representative of the physical model in  computers, the cyber world.  The physical model, namely, the SDE often refers to the phyiscal system (particularly,  which is an engineered system) itself  while its cyber couterpart, namely, the numerical method symbolizes it in the cyber world. In the age of networking and information technology, the cyber model  plays a key role in understanding and controlling the underlying physical system, which not only envisions the approximate behaviour of the physical system \cite{higham2001,higham2002,higham2003} but is also   utilized to extract knowledge of the system from data \cite{huang2016,scotton2013} and   based on which control is designed and implemented \cite{astrom1970,astrom1997,heirung2018,huang2015,huang_partII,nghiem2006}.
It is natural and imperative
\begin{itemize}
 \item[(I)]  to  find out the relationship between the physical model (i.e., the SDE)  and its cyber counterpart (i.e., the numerical method) of a dynamical system;

 \item[(II)] and to ensure that they both share some dynamical properties such as stability which is   concerned  in the study. 
\end{itemize}

The principal aim of this  paper is to address the  problems (I) and (II) of fundamental importance in the age of networking and information technology. 
As a matter of fact, the fundamental importance of these problems has been recognized and  they have been studied in a vast literature. Results that address the  problems can be found in those many on convergence and stability of numerical methods for SDEs, where  the SDE and the numerical method are treated as separate systems which  are linked by inequalities in some moment sense on any finite time interval \cite{higham2002, higham2003,hutzenthaler2012, kloeden1992,mao2015b, sabanis2013}. The ability of a cyber system to reproduce the stability of its underlying physical system can be found in a large number of works.  The problem how to reproduce the stability of the physical system in its cyber counterpart, which is called the test problem, has also been  studied for  SDE   \cite{higham2000, higham2003, mao2015a, saito1996}. The key question  in a test problem is  \cite{higham2000}

\begin{itemize}
 \item[(Q1)] for what stepsizes $\Delta t$ does the cyber system (the numerical method)  share the stability property of the underlying physical system (the SDE)?
\end{itemize}
\noindent  This  naturally provokes the converse question     \cite{higham2003,mao2015a}
\begin{itemize}
 \item[(Q2)]  does the stability of the cyber system (the numerical method)   for  small stepsizes $\Delta t$ imply that of the underlying physical system (the SDE)?
\end{itemize}

These questions deal with asymptotic ($t \to \infty$) properties and hence cannot be answered directly by applying traditional finite-time convergence results  \cite{higham2003, mao2015a}. Results that answer questions (Q1) and (Q2)  can be found in the literature. For example, results for scalar linear systems were given in  \cite{higham2000, saito1996}. For multi-dimensional systems with global Lipschitz condition,
Higham, Mao and Stuart \cite{higham2003} introduced a natural finite-time strong convergence condition \cite[Condition 2.3]{higham2003}, which links a cyber system with its underlying physical system  by an inequality  in some moment sense over any finite time interval,    and proved that there is a sufficiently small $\Delta t^\ast >0$ such that, for every  $\Delta t \in (0, \Delta t^\ast]$,  the mean-square exponential stability of  the physical system  is equivalent to that of its cyber  counterpart   \cite[Lemmas 2.4-2.5 and  Theorem 2.6]{higham2003}.  Recently, Mao \cite{mao2015a} developed new techniques to handle the small $p$th moment ($p \in (0,1) $) and showed that, under a natural finite $p$th moment condition \cite[Assumption 2.4]{mao2015a} and a natural finite-time convergence condition  \cite[Assumption 2.5]{mao2015a},   the $p$th moment exponential stability of  the physical system  is equivalent to that of its cyber  counterpart  for every  $\Delta t \in (0, \Delta t^\ast]$ with some sufficiently small $ \Delta t^\ast >0$ \cite[Lemmas 2.6-2.7 and Theorem 2.8]{mao2015a}.  As is pointed out in \cite{mao2015a}, there are many open problems in this research. For instance, although the existence of the (sufficiently small) upper bound $\Delta t^\ast >0$ of stepsizes   has been shown \cite{higham2003,mao2015a}, it is severely limited by the growth constant of the exponentially stable system, which refers to the physical system  and its cyber counterpart  when answering  (Q1) and (Q2), respectively.  Recall that, though the growth  and the rate constants are related, it is  only the rate constant  that counts in the definition of exponential stability, see Definitions \ref{pthM_ExpStab} and \ref{AS_ExpStab} below. It appears that, either to reproduce or to imply  the exponential stability of the physical system  by  its cyber counterpart, the condition imposed on the  stepsizes that explicitly depends on the growth constant could/should be relaxed \cite{stuart1996}. This could significantly improve the upper bound $\Delta t^\ast $ of stepsizes and facilitate the computation.

It is noted that the physical system (the SDE) and its   cyber counterpart (the numerical method) are bound by inequalities in the literature \cite{higham2002, higham2003,hutzenthaler2012, kloeden1992,mao2015b, sabanis2013,saito1996,stuart1996}.  Nevertheless,   they  remain as two systems, largely  separate. 
This paper constructs the cyber-physical model of a dynamical system that is a seamless, fully synergistic integration of the physical model  (the SDE) and its cyber counterpart (the numerical method). Here we present a new and general class of stochastic impulsive differential equations (SiDEs) which can be used to represent the integrated dynamics of the physical system and its cyber counterpart. Impulsive differential equations, also known as impulsive systems,  have been studied  for several decades  \cite{caraballo2017,hespanha2008,huang2022,peng2010,samoilenko1995,yang2001}.  But these impulsive systems in the literature are just the physical subsystems in our  general class of SiDEs, see Section \ref{sec:generalSiDEs}. Our proposed SiDE  composed of the physical subsystem and the cyber subsystem is formulated as a canonic form of cyber-physical systems (CPSs), which presents a systematic framework for  the study of CPSs \cite{cps2008,koutsoukos2014}. The  canonic form  not only provides a holistic view of CPSs but also reveals the intrinsic relationship between the physical subsystem  and the cyber subsystem. 
In the study of numerical analysis, we present the CPS of a numerical method (the widely-used Euler-Maruyama  method) for SDEs, which  is in the form of our proposed SiDEs and represents  a seamless integration of the SDE and the numerical method. The SDE and the numerical method are the physical subsystem and the cyber subsystem of the CPS, respectively.  
  From the viewpoint of cybernetics \cite{wiener1961}, an essential problem to study is whether and how the CPS  reproduces some dynamical properties such as the stability of its physical or cyber subsystem  since `the primary concern of cybernetics is on the qualitative aspects of the interrelations among the various components of a system and the synthetic behavior of the complete mechanism'  \cite{tsien2015}.  Using the terminology of the CPSs, we rephrase the  key questions (Q1) and  (Q2)   as follows.

(Q1) For what stepsizes $\Delta t$ do the CPS and, hence,  the cyber subsystem    reproduce the stability property of the physical subsystem?

(Q2) Does the stability of the cyber subsystem for  small stepsizes $\Delta t$ imply that of the CPS  and, hence, that of the physical subsystem?

To address the key questions, we develop a Lyapunov stability theory for our proposed general class of SiDEs and apply it to the CPS of the  Euler-Maruyama  method for SDEs. Applying our established stability theory, we  prove positive results to the two key questions (Q1) and (Q2). This establishes a foundational theory for the CPSs of   numerical methods for SDEs. As  application of our established theory, we present the CPS Lyapunov inequality,   the necessary and sufficient condition for mean-square exponential stability of the CPS of  the  Euler-Maruyama  method for linear SDEs. In this paper, we initiate the study of systems numerics and there are many interesting and/or challenging problems for future work.

\section{A general class of stochastic impulsive differential equations}  \label{sec:generalSiDEs}

Throughout  this   paper, unless otherwise specified, we shall employ the
following notation.  Let us denote by  $( \Omega , \mathcal{F}, \{ \mathcal{F}_t
\}_{t \ge 0}, \PP ) $ a complete probability space with a
filtration $ \{ \mathcal{F}_t \}_{t \ge 0} $ satisfying the usual
conditions (i.e. it is right continuous and $\mathcal{F}_0$ contains
all $\PP$-null sets) and by ${\mathbb{E}}[\cdot]$   the expectation
operator with respect to the probability measure. Let $B(t)= \begin{bmatrix}
B_1(t) & \cdots & B_m(t) \end{bmatrix}^T$ be an $m$-dimensional Brownian motion
defined on the probability space. If $x, y$ are real numbers, then
$x \vee y$ denotes the maximum of $x$ and $y$, and $x \wedge y$
stands for the minimum of $x$ and $y$. If $A$ is a vector or a matrix, its transpose is denoted by $A^T$. If $P$ is a square matrix, $P>0$ (resp. $P<0$) means that P is a symmetric positive (resp. negative) definite matrix of appropriate dimensions 
while $ P \ge 0$ (resp. $P \le 0$) is a symmetric positive (resp. negative) semidefinite matrix. Let  $\lambda_M( \cdot) $ and $\lambda_m( \cdot)$ be  a matrix's  eigenvalues   with
maximum  and  minimum real parts, respectively. 
 Denote by $| \cdot |$  the
Euclidean norm of a vector and the trace (or Frobenius) norm of a matrix. 
%

Let $C^{2, 1} (\RR^n \times \RR_+; \RR_+)$ be the family of all nonnegative functions $V(x, t)$ on $\RR^n \times \RR_+$ that are continuously twice differentiable in $x$ and once in $t$. Let $\mm^p ([a, b]; \RR^n)$ be the family of $\RR^n$-valued adapted process $\{ x(t): a \le t \le b \}$ such that $\EE \int_a^b |x(t)|^p \d t < \infty$. 
 Denote by $I_n$ the $n \times n$ identity matrix and  by $0_{n \times m}$ the $n \times m$  zero matrix, or, simply, by $0$  the zero matrix of compatible dimensions. Let $\{ \xi (k) \}_{k \in \NN} $, $\NN =\{ 0, 1, 2, \cdots \}$, be an independent and identically distributed sequence with $\xi (k)= \begin{bmatrix}
\xi_1(k) & \cdots & \xi_m(k) \end{bmatrix}^T$, and $ \xi_j(k) $, $j =1, 2, \cdots, m$, obeying standard normal distribution while $\{t_k\}_{k \in \NN}$ is a strictly increasing sequence which satisfies $0=t_0< t_1 < t_2 < \cdots$, $0<\underline{\Delta t} :=\inf_{k \in \NN} \{ t_{k+1} -t_k \} \le \overline{\Delta t} :=\sup_{k \in \NN} \{ t_{k+1} -t_k \} < \infty$, and hence $t_k \to \infty$ as $k \to \infty$. 

Let us consider  a stochastic impulsive system described by SiDE 
\begin{subequations}  \label{SiDE-xy}
\begin{align}
& \mathrm{d}  x(t) = f(x(t), t)\mathrm{d}t + g(x(t),t )\mathrm{d} B(t)   \label{SDE_x} \\
& \d y(t)  = \tilde{f} (x(t), y(t), t) \d t + \tilde{g}( x(t), y(t), t) \d B (t),   \quad  t \in [ t_k,  t_{k+1})     \label{SDE_y}     \\
& \Delta (x(t_{k+1}^-), k+1) := x(t_{k+1}) -x(t_{k+1}^-)  \nonumber  \\
& \; \;  {}   = h_f (x(t_{k+1}^-), k+1) + h_g  (x(t_{k+1}^-), k+1)   \xi (k+1)  \label{Impulse_x} \\
&\tilde{\Delta} (x(t_{k+1}^-), y(t_{k+1}^-), k+1) := y(t_{k+1}) - y(t_{k+1}^-)  \nonumber  \\
 & \; \; {}       = \tilde{h}_f (x(t_{k+1}^-), y(t_{k+1}^-), k+1)    + \tilde{h}_g (x(t_{k+1}^-), y(t_{k+1}^-), k+1)  \xi (k+1),  \quad k \in \NN  \label{Impulse_y} 
\end{align}
\end{subequations}
 with initial data $x(0)  \in \RR^n$ and $ y(0)  \in \RR^q$, 
where $ \xi (k+1)$ is independent of $\{ x(t), y(t):  0 \le t < t_{k+1}  \}$, and  $ f : \RR^n \times \RR_+ \to \RR^n $, $g : \RR^n \times \RR_+ \to \RR^{n \times m} $,   $h_f : \RR^n \times \NN \to \RR^n $,  $h_g : \RR^n \times \NN  \to \RR^{n \times m} $, 
 $\tilde{f} : \RR^n \times \RR^q \times \RR_+ \to \RR^q $, $\tilde{g} : \RR^n \times \RR^q \times \RR_+ \to \RR^{q \times m} $, 
$\tilde{h}_f : \RR^n \times \RR^q \times \NN \to \RR^q $, $\tilde{h}_g: \RR^n \times \RR^q \times \NN \to \RR^{q \times m} $ are measurable functions. To study stability of the system, we assume that they obey
$f (0, t) =0 $, $g(0, t)=0$,      $h_f(0, k) =0$ and $h_g(0, k) =0$, 
$\tilde{f} (0, 0, t) =0 $, $\tilde{g}(0,0, t)=0$, $\tilde{h}_f(0,0, k) =0$ and $\tilde{h}_g(0,0, k) =0$ for all $t \in \RR_+$ and $k \in \NN$,
and  satisfy the global Lipschitz   conditions.
\begin{assumption}   \label{globalLipschitz}
There is $L >0$ such that 
\begin{equation*}  
   | f(x, t) - f(\bar{x}, t) | \vee |g(x, t)- g(\bar{x}, t)|   
   \vee | h_f(x, k) - h_f(\bar{x}, k) |  \vee | h_g(x, k)- h_g(\bar{x}, k)| \le L  |x- \bar{x}|    
\end{equation*} 
for all $(x, \bar{x}) \in \RR^n \times \RR^n$, $t \in \RR_+$, $k \in \NN$;  and there is $\tilde{L}>0$ such that
\begin{multline*} 
   | \tilde{f}(x, y, t) - \tilde{f}(\tilde{x}, \tilde{y}, t) | \vee | \tilde{g}(x, y, t) - \tilde{g}(\tilde{x}, \tilde{y}, t) | \\
    \vee | \tilde{h}_f(x, y, k) - \tilde{h}_f(\tilde{x}, \tilde{y}, k) |  \vee  | \tilde{h}_g(x, y, k) - \tilde{h}_g(\tilde{x}, \tilde{y}, k) |    \le \tilde{L} ( |x- \tilde{x}|  \vee | y - \tilde{y}|)    
\end{multline*}
for all $(x, y, \tilde{x}, \tilde{y}) \in \RR^n \times \RR^q \times  \RR^n \times \RR^q$, $t\in \RR_+$, $k\in \NN$.
\end{assumption}

It is the intersection, interaction and interrelation of the physical system and its cyber counterpart  \cite{astrom1997,control2002,cps2008,higham2003,huang2016,lee2010,nghiem2006,scotton2013} in the age of networking and information technology that motivate our study of stochastic impulsive system (\ref{SiDE-xy}) , which is constructed as  a canonic form of CPSs  that is a seamless, fully synergistic integration of the physical system  and its cyber counterpart, see also \cite{huang_partII}. It is observed in Section \ref{sec:numericalCPS} that the CPS of a numerical method for SDEs is a special case of (\ref{SiDE-xy}) where there is no impulse imposed on the physical subsystem (\ref{SDE_x}). 
We stress that we delibrately include the impulse (\ref{Impulse_x}) and use the impulsive subsystem  (\ref{SDE_x},\ref{Impulse_x})  in  SiDE (\ref{SiDE-xy}) to show that  the impulsive systems in the literature \cite{caraballo2017,hespanha2008,huang2022,peng2010,samoilenko1995,yang2001} are just the physical subsystems  (\ref{SDE_x},\ref{Impulse_x}) in our general class of SiDEs.  We  construct a system integration (\ref{SiDE-xy})  of two impulsive subsystems   in marked contrast to  the impulsive systems in the literature, which highlights the distinction between our new class  and those existing.

\begin{remark}  \label{remark-impulsive}
 It should be   noticed that, usually, the impulse interval of the subsystem  $x(t)$ is  a multiple of that of the subsystem  $y(t)$ since the former is actually the interval between two consecutive physical impulses imposed on  the physical system  $x(t)$  while  the latter  the stepsize of the numerical method, see Section \ref{sec:numericalCPS}. In such a specific    case of SiDE (\ref{SiDE-xy}) where $t_k = k \Delta t$,    $ h_f (x, k) $ and $h_g (x, k)$  are some functions of $x$ only if $k$ is a multiple of  integer $k_0>1$, and, otherwise,  $ h_f (\cdot, k) =0$ and $h_g (\cdot, k)=0$, where  $k_0$ is the number of steps of the numerical scheme between two consecutive physical impulses.
\end{remark}

Clearly, the trivial  solution is  the equilibrium of  system (\ref{SiDE-xy}).  
For a function $V \in C^{2,1} (\RR^n \times \RR_+; \RR_+)$,  the infinitesimal generator $\ll V: \RR^n \times \RR_+ \to \RR$ associated with system (\ref{SDE_x}) is defined as
\begin{equation}    
  \ll V (x, t) = V_t (x, t) + V_x (x, t) f(x, t)     + \frac{1}{2}   {\rm trace} \left[ g^T (x, t) V_{xx} (x, t) g(x, t) \right],  \label{LV}
\end{equation}
where 
$ V_t (x, t) =  \frac{ \partial V(x, t)} { \partial t} $, $V_x (x, t) = \begin{bmatrix}  \frac{ \partial V(x, t)}{ \partial x_1} & \cdots & \frac{ \partial V(x, t)}{ \partial x_n}  \end{bmatrix} $ and $   V_{xx} (x, t) =  \begin{bmatrix}  \frac{ \partial^2 V(x, t)}{ \partial x_i \, \partial x_j}   \end{bmatrix}_{n \times n} $.
Similarly, for a function $\tilde{V} \in C^{2,1} (\RR^q \times \RR_+; \RR_+)$, the generator  $\tll \tilde{V}:  \RR^n \times \RR^q \times \RR_+ \to \RR$ associated with system (\ref{SDE_y}) is defined as
\begin{equation}    
  \tll \tilde{V} (x, y, t) = \tilde{V}_t (y, t) + \tilde{V}_y (y, t) \tilde{f}(x, y, t)   + \frac{1}{2} {\rm trace} \left[ \tilde{g}^T (x, y,  t) \tilde{V}_{yy} (y, t) \tilde{g}(x, y, t) \right].    \label{tLtV}
\end{equation}

Let $z(t) = [ x^T(t) \; \,  y^T(t) ]^T \in \RR^{n+q}$,  $C = [  I_n  \; \,  0_{n \times q}  ]$, 
$ D =  [  0_{ q \times n}  \; \, I_{q}  ]$, and, therefore, $x(t) = C z(t)$,  $y(t) = D z(t)$ for all $t \ge 0$.
The stochastic impulsive system (\ref{SiDE-xy}) can be written in a compact form  
\begin{subequations}    \label{Compact-z}
\begin{align}
  &  \d z(t) = F ( z(t), t) \d t + G (z(t), t) \d B(t)  \quad  t \in [ t_k, t_{k+1})     \label{SDE_z} \\
  &  z (t_{k+1}) - z (t_{k+1}^-)     = H_F (z(t_{k+1}^-), k+1) + H_G  (z(t_{k+1}^-), k+1) \xi (k+1)  \quad  k \in \NN \label{Impulse_z}
\end{align}
\end{subequations}
  with initial value $z(0)  =[ x(0)^T  \; \,  y(0)^T ]^T \in \RR^{ n + q}$, where functions $F: \RR^{n+q} \times \RR_+ \to \RR^{n+q}$,  $G: \RR^{n+q} \times \RR_+ \to \RR^{(n+q) \times m}$, $H_F: \RR^{n+q} \times \NN \to \RR^{n+q}$, $H_G: \RR^{n+q} \times \NN \to \RR^{(n+q) \times m}$ are given by
\begin{multline*}
   F(z, t) = \begin{bmatrix}  f\left( C z, t \right)  \\
                      \tilde{f} \left( Cz  , D z, t \right) \end{bmatrix},
 G(z, t) =\begin{bmatrix}  g\left( C z, t \right)  \\
                      \tilde{g} \left( Cz, D z \right), t \end{bmatrix},    \\
  H_F (z, k) = \begin{bmatrix}   h_f \left( C z, k \right)  \\
                      \tilde{h}_f \left( Cz, Dz, k \right) \end{bmatrix},
 H_G (z, k) = \begin{bmatrix}    h_g \left( C z, k \right)  \\
                      \tilde{h}_g \left( C z, D z , k\right) \end{bmatrix} .
\end{multline*}
Consequently,  stochastic impulsive system (\ref{Compact-z}) obeys $F(0, t)=0 $, $G(0, t)=0$, $H_F (0, k) =0$, $H_G(0, k)=0$ for all $t \in \RR_+$, $k \in \NN$, and, by  Assumption \ref{globalLipschitz}, it satisfies the global Lipschitz  condition, that is,  there is a constant $L_z >0$ such that
\begin{multline} \label{globalLipschitz-z}
  | F(z, t) - F(\tilde{z}, t) | \vee | G(z, t) - G(\tilde{z}, t) |  \\
 \vee | H_F(z, k) - H_F(\tilde{z}, k) |   \vee | H_G(z, k) - H_G(\tilde{z}, k)) |   
       \le L_z  |z- \tilde{z}|
\end{multline}
 for all $(z, \tilde{z}) \in \RR^{n+q} \times \RR^{n+q}$, $t \in \RR_+$, $k \in \NN$. 
It is easy to obtain the following result on existence and uniqueness of solutions for SiDE  (\ref{Compact-z}).
\begin{lemma}  \label{existence_n_uniqueness}
     Under Assumption   \ref{globalLipschitz}, there exists a unique (right-continuous) solution $z(t)$ to SiDE (\ref{Compact-z}) (viz.,  (\ref{SiDE-xy})) on $t \ge 0$ and the solution belongs to  $\mm^2 ([0, T]; \RR^{n+q})$ for all $T \ge 0$.
\end{lemma}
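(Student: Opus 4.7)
The plan is to construct the solution inductively across the deterministic impulse instants $\{t_k\}_{k \in \NN}$, reducing matters on each interval $[t_k, t_{k+1})$ to the classical existence and uniqueness theorem for It\^o SDEs under global Lipschitz conditions (see, e.g., \cite{mao2007book}). On the first interval $[t_0, t_1) = [0, t_1)$, equation (\ref{SDE_z}) is a standard It\^o SDE; the inequality (\ref{globalLipschitz-z}) together with $F(0,t)=0$ and $G(0,t)=0$ yields both Lipschitz continuity and linear growth of $F, G$, so the classical theorem furnishes a unique $\ff_t$-adapted continuous solution on $[0, t_1)$ satisfying $\EE \sup_{0 \le t < t_1} |z(t)|^2 < \infty$. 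In particular, the left limit $z(t_1^-) = \lim_{t \uparrow t_1} z(t)$ exists a.s. and in $L^2$ and is $\ff_{t_1^-}$-measurable.

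Next I would define $z(t_1)$ via the impulse rule (\ref{Impulse_z}). Since $\xi(1)$ is standard normal, independent of $\ff_{t_1^-}$ (and hence of $z(t_1^-)$), and the Lipschitz bounds on $H_F$ and $H_G$ combined with $H_F(0,1)=H_G(0,1)=0$ give $|H_F(z,1)| \vee |H_G(z,1)| \le L_z |z|$, one obtains the moment bound $\EE|z(t_1)|^2 \le c_1 \bigl(1+ \EE|z(t_1^-)|^2\bigr) < \infty$, and $z(t_1)$ is $\ff_{t_1}$-measurable. Given $z(t_k)$ with $\EE|z(t_k)|^2 < \infty$, the same two steps applied on $[t_k, t_{k+1})$ with initial value $z(t_k)$ and at $t_{k+1}$ produce the unique continuation. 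Because $t_k \to \infty$, only finitely many impulse times lie in $[0, T]$ for any fixed $T \ge 0$, so finitely many iterations deliver a solution defined and unique on $[0, T]$ that is right-continuous with left limits.

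Finally, for the $\mm^2([0,T]; \RR^{n+q})$ claim, I would combine the standard second-moment energy estimate on each subinterval $[t_k, t_{k+1} \wedge T)$, obtained from It\^o's formula applied to $|z(t)|^2$ and Gronwall's inequality, with the impulse-step inequality $\EE|z(t_{k+1})|^2 \le c \bigl(1+\EE|z(t_{k+1}^-)|^2\bigr)$, and sum the resulting bounds over the finitely many subintervals meeting $[0, T]$. Uniqueness is automatic on each continuous piece from the SDE theory and at each impulse point from the explicit formula (\ref{Impulse_z}), so patching yields global uniqueness. The only real delicacy is the bookkeeping across impulses, which is genuinely mild because the jump count on $[0, T]$ is finite; no estimate beyond the classical ones is needed.
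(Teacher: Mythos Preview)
Your proposal is correct and follows essentially the same approach as the paper: both argue by induction over the deterministic impulse times, invoking the classical global-Lipschitz existence/uniqueness theorem for It\^o SDEs on each interval $[t_k,t_{k+1})$, then handling the jump at $t_{k+1}$ via the explicit formula (\ref{Impulse_z}) together with the Lipschitz bounds on $H_F,H_G$ and the independence of $\xi(k+1)$, and finally using $t_k\to\infty$ to cover all of $[0,T]$ with finitely many steps. Your write-up is in fact more explicit than the paper's about the measurability and $L^2$ bookkeeping, but the underlying argument is the same.
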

\begin{proof}  Since system (\ref{Compact-z}) satisfies the global Lipschitz continuous condition (\ref{globalLipschitz-z}), according to \cite[Theorem 3.1, p51]{mao2007book}, there exists a unique solution $z(t)$  to SiDE (\ref{Compact-z}) on $t \in [t_0, t_1)$ and the solution belongs to $\mm^2 ([t_0, t_1); \RR^{n+q})$. Note that $\xi (k+1)$ is independent of $\{ z(t): t  \in [t_0, t_1) \}$. By virtue of continuity of functions $H_F (z, \cdot)$ and $H_G  (z, \cdot)$ with respect to $z$, there exists a unique solution $z(t_1)$ to  (\ref{Compact-z}) on $t = t_1$. Moreover, (\ref{Impulse_z}) and (\ref{globalLipschitz-z}) imply that the second moment of $z(t_1)$ is finite. And, again, according to \cite[Theorem 3.1, p51]{mao2007book}, one has that there is a unique right-continuous solution $z(t)$  to  (\ref{Compact-z}) on $t \in [t_0, t_2)$ and the solution belongs to $\mm^2 ([t_0, t]; \RR^{n+q})$ for $t \in [t_0, t_2)$. Recall that $0=t_0< t_1< t_2 < \cdots< t_k < \cdots$ and $t_k \to \infty$ as $k \to \infty$. By induction, one has that there exists a unique (right-continuous) solution $z(t)$ to SiDE (\ref{Compact-z}) for all $t \ge 0$ and the solution belongs to  $\mm^2 ([0, T]; \RR^{n+q})$ for all $T \ge 0$. 
\end{proof} 

Now that we have shown the existence and uniqueness of solutions to SiDE (\ref{Compact-z}), we shall further study the stability of the solutions to the SiDE. Let us introduce the definitions of exponential stability for  SiDE (\ref{Compact-z}).

\begin{definition}   \label{pthM_ExpStab}
 \cite[Definition 4.1, p127]{mao2007book} System (\ref{Compact-z}) is said to be $p$th ($p>0$) moment exponentially stable if there is a pair of positive constants $K$ and $c$ such that 
$  \EE |z(t)|^p \le K |z(0) |^p e^{-c t} $ for all $ t \ge 0 $,
 which leads to 
$  
 \limsup_{t \to \infty} \frac{1}{t}  \ln (\EE |z(t)|^p) \le -c  < 0   
$
for all $z(0) \in \RR^{n+q}$.
\end{definition}
\begin{definition}   \label{AS_ExpStab}
 \cite[Definition 3.1, p119]{mao2007book} System (\ref{Compact-z}) is said to be almost surely exponentially stable if 
$
   \limsup_{t \to \infty} \frac{1}{t}  \ln |z(t)|  < 0  
$  for all $z(0 ) \in \RR^{n+q}$.
\end{definition}


\section{Lyapunov stability theory for the general class of impulsive systems}   \label{sec:stabilitySiDEs}

We dedicate this section  to   establishing by  Lyapunov methods a stability theory for  our proposed general class of SiDEs. The general class of SiDEs is formulated as a canonic form of CPSs and we shall develop a foundational theory for stability of CPSs, which will be applied to the CPS of a numerical method for SDEs. In the previous section, for simplicity, the compact form (\ref{Compact-z}) of system (\ref{SiDE-xy}) is employed to study the existence and uniqueness of solutions to the SiDEs.  Now we exploit the structure of SiDE (\ref{Compact-z}) which is composed of subsystems  (\ref{SDE_x},\ref{Impulse_x}) and (\ref{SDE_y},\ref{Impulse_y}) and show the stability of  the subsystems as well as that of  the whole system (\ref{Compact-z}).

\begin{theorem}   \label{Theorem_StableContinuous}
Suppose that  Assumption   \ref{globalLipschitz} holds. Let $V \in C^{2, 1} (\RR^n \times \RR_+; \RR_+)$ and $\tilde{V} \in C^{2, 1}  (\RR^q \times \RR_+; \RR_+)$ be a pair of candidate Lyapunov functions for the subsystems (\ref{SDE_x},\ref{Impulse_x}) 
and  (\ref{SDE_y},\ref{Impulse_y}), respectively, which satisfy
\begin{subequations}      \label{LyapunovFunctions}
\begin{align}
 & c_1 |x|^p \le V(x, t) \le c_2 |x|^p,  \label{LyapunovFunctions_x} \\
&  \tilde{c}_1 |y|^p \le \tilde{V}(y, t) \le \tilde{c}_2 |y|^p  \label{LyapunovFunctions_y}
\end{align}
\end{subequations}
 for all $(x, y, t) \in \RR^n \times \RR^q \times \RR_+$ and some positive constants $p, c_1, c_2, \tilde{c}_1, \tilde{c}_2$. Assume that there are positive constants $\al$, $\tilde{\al}_1$, $\tilde{\al}_2$, $\bt$, $\tilde{\bt}_1$, $\tilde{\bt}_2$ such that 
\begin{subequations}        \label{LV-EV-xy}
\begin{align}
 &   \ll V(x, t) \le - \al V(x, t),  
  \label{LV_1}  \\ 
 &   \tll \tilde{V} (x, y, t)  \le \tilde{\al}_1 V(x, t)  + \tilde{\al}_2 \tilde{V} (y, t) , \;\; t \in [t_k, t_{k+1})   
   \label{tLtV_1} \\
 &  \EE \big[ V (x + \Del (x, k+1), t)  \big| x \big] \le \bt V (x, t),      \label{EV_tk-1}\\
 &    \EE \big[ \tilde{V} ( y + \tDel (x, y, k+1), t ) \big| x, y \big] \le \tilde{\bt}_1  V (x, t )  + \tilde{\bt}_2   \tilde{V} (y, t )   \label{EtV_tk-1}
\end{align}
\end{subequations}
for all $(x, y, t) \in \RR^n \times \RR^q \times \RR_+$ and $k \in \NN$. Let the impulse time sequence $\{ t_k \}$ satisfy
\begin{equation}  \label{ImpulseInterval}
  \frac{ \ln \bt}{ \al} < \underline{\Delta t} \le \overline{\Delta t} < \frac{- \ln \tilde{\bt}_2} { \tilde{\al}_2}.
\end{equation}
  Then SiDE (\ref{Compact-z})   is $p$th moment exponentially stable.
\end{theorem}

{\begin{proof} According to  Lemma \ref{existence_n_uniqueness}, that   Assumption \ref{globalLipschitz} holds implies there exists a unique solution to SiDE (\ref{SDE_x}-\ref{Impulse_y}).  Let us fix, for simplicity only, any $z(0) =  [ x(0)^T  \; \, y(0)^T ]^T \in \RR^{n+q}  $ and  show the stability of the solution.  The proof is rather technical so we devide it into five steps, in which we will show: 1) the exponential stability of $x(t)$; 2) some propeties of $y(t)$; 3) the exponential stability of $y(t)$ when $|x(0) |=0$; 4) the exponential stability of $y(t)$ when $|x(0)| >0$; 5) the exponential stability of $z(t)$. Some ideas and techniques in this proof are derived from our results on input-to-state stability (ISS) of SDEs  \cite[Theorem 3.1 and Remark 3.1]{huang2009}, where  $x(t)$ is treated as disturbance in the subsystem $y(t)$.

{\it Step 1:}  Observe that (\ref{LyapunovFunctions_x}), (\ref{LV_1}) and (\ref{EV_tk-1}) as well as  $ \ln \beta  <  \al \uDt$ from (\ref{ImpulseInterval}) are a specific case of conditions (i), (ii) and (iii) of \cite[Theorem 3]{huang2022} in which $\lambda_1=\gamma_1 =\cdots = \gamma_{\bar{m}} =0$. This implies 
\begin{equation}    \label{EV_0_t_exp}
    \EE V(x(t), t)  \le   \, V((0), 0)  \, e^{- \bar{\al} t}   \quad \forall \, t \ge 0
\end{equation}
where $\bal \in (0, \al - \bar{a}) $ and $\bar{a} \in (0, \al )$ with $ \ln \beta < \bar{a} \uDt <  \al \uDt$.
By condition (\ref{LyapunovFunctions_x}),   subsystem $x(t)$, which is part 
 (\ref{SDE_x},\ref{Impulse_x})  of the system   (\ref{SiDE-xy}),    is $p$th moment exponentially stable (with Lypunov exponent no larger than  $-\bal$).

{\it Step 2:} Let us  consider the dynamics of subsystem $y(t)$, which is the other part   (\ref{SDE_y},\ref{Impulse_y}) of   system  (\ref{SiDE-xy}).  By Lemma \ref{existence_n_uniqueness} and the It{\^o} formula, 
one can derive that
\begin{eqnarray}
   \EE \tilde{V} ( y(t ), t )    = \EE \tilde{V} (y(\tilde{t}), \tilde{t}) + \int_{\tilde{t}}^t \EE \tll \tilde{V}(x(s), y(s), s) \d s    \label{EtV_t}
\end{eqnarray}
for all  $t_k \le \tilde{t} \le t < t_{k+1}$ and $k \in \NN$ while condition  (\ref{tLtV_1}) produces
\begin{equation}
 \EE \tll   \tilde{V}(y(t), t) \le    \tilde{\al}_1 \EE V(x(t), t) + \tilde{\al}_2 \EE \tilde{V} (y(t), t) 
 \label{dEtV-dt}
\end{equation}
on $  [t_k, t_{k+1})$ for all $k \in \NN$. This means that $\EE \tilde{V} ( y(t ), t )$ is right-continuous on $  [0, \infty)$ and could only  jump at impulse times $\{ t_{k+1} \}_{k \in \NN}$.  Notice condition (\ref{ImpulseInterval}) implies that
 $  \tilde{\bt}_2 e^{ \tilde{\al}_2 \, \overline{\Delta t} } < 1  $
and there is a pair of positives $ \delta  \in (0, 1- \tilde{\bt}_2)$ and $\bdel \in (0, \bar{\al} ]$ sufficiently small  for
\begin{equation}    \label{def-delta}
 ( \tilde{\bt}_2 + \delta) e^{ (\tilde{\al}_2  + \delta + \bdel) \, \overline{\Delta t} }  \le 1.
\end{equation}
It is easy to observe from (\ref{dEtV-dt})  that 
\begin{equation}
  \EE \tll   \tilde{V}(y(t), t)
\le (  \tilde{\al}_2   + \delta) \EE \tilde{V} (y(t), t)     \label{ELtV-tk-tk+1}
\end{equation}
for such $t \in [t_k, t_{k+1})$ and $k \in \NN$ that
$$
 \EE \tilde{V} (y(t), t) \ge \frac{\tilde{\al}_1}{ \delta } \EE V(x(t), t) .
$$
Similarly, one can observe from (\ref{EtV_tk-1}) that
\begin{eqnarray}
   \EE   \tilde{V}(y(t_{k+1}), t_{k+1})
\le (  \tilde{\bt}_2 + \delta)  \EE \tilde{V} (y(t_{k+1}^-), t_{k+1}^-)    \label{EtV-tk+1}
\end{eqnarray}
 whenever
$$
\EE \tilde{V} (y(t_{k+1}^-), t_{k+1}^-)  \ge   \frac{\tilde{\bt}_1}{\delta } \EE V(x(t_{k+1}^-), t_{k+1}^-).
$$

{\it Step 3:}  If $x(0)=0$ (equivalently, by (\ref{LyapunovFunctions_x}),  $V(x(0),0)=0$),  then (\ref{EV_0_t_exp})  gives   $\EE V(x(t), t) = 0$ for all $t \ge 0$. Using (\ref{tLtV_1}), (\ref{EtV_t}) and (\ref{dEtV-dt}),  one has
\begin{equation}    
  \EE \tilde{V} (y(t), t)   \le  \tilde{V} (y(0), 0) + \tilde{\al}_2 \int_0^t \EE  \tilde{V} (y(s), s) \d s  \quad  \forall \, t \in [0, t_1).  \label{EtV-EL-0-t1}
\end{equation}
This, by the Gronwall inequality  (\cite[Lemma 4.5.1, p129]{kloeden1992}, \cite[Theorem 8.1, p45]{mao2007book})  implies 
\begin{equation}
  \EE \tilde{V} (y(t), t)   \le   \tilde{V} (y(0), 0) \, e^{ \tilde{\al}_2 \, t}  \quad \forall \, t \in [0, t_1)   \label{EtV-0-t1}
\end{equation}
and, particularly,
 $ \EE \tilde{V} (y(t_1^-), t_1^-)   \le  \tilde{V} (y(0), 0) \, e^{ \tilde{\al}_2 \, t_1} $.
Conditions (\ref{EtV_tk-1}) and (\ref{def-delta}) produce
\begin{multline}
  \EE \tilde{V} (y(t_1), t_1)  \le \tilde{\bt}_2  \EE \tilde{V} (y(t_1^-), t_1^-)    \le \tilde{\bt}_2  \, \tilde{V} (y(0), 0)  \,  e^{ \tilde{\al}_2 \, t_1}    \\
 <  \tilde{V} (y(0), 0) \, e^{ -(\tilde{\al}_2 + \delta + \bdel )\, \overline{\Delta t} } \, e^{ \tilde{\al}_2 \, t_1}     
 \le   \tilde{V} (y(0), 0) e^{-( \delta + \bdel )\, \overline{\Delta t} }.  \label{EtV-t1}
\end{multline}
One can repeat the derivations (\ref{EtV-EL-0-t1})-(\ref{EtV-t1}) over the interval between any two consecutive impulse times $[t_k, t_{k+1}]$ and obtain
\begin{equation}
    \EE \tilde{V} (y(t), t)   \le    \tilde{V} (y(0), 0) e^{ \tilde{\al}_2 \, (t- t_k) -k ( \delta + \bdel) \, \overline{ \Delta t}}
\end{equation}
for all $t \in [t_k, t_{k+1})$ and $k \in \NN$. 
This implies
\begin{equation}   \label{EtV-0-t-exp-V0}
  \EE \tilde{V} (y(t ), t )   \le e^{ (\tilde{\al}_2 + \delta + \bdel )\, \overline{\Delta t} } \, \tilde{V} (y(0), 0) \, e^{ -( \delta + \bdel) \, t}  \quad \forall \, t \ge 0.
\end{equation}
 By condition (\ref{LyapunovFunctions_y}),  subsystem $y(t)$ is $p$th moment exponentially stable (with Lyapunov exponent no larger than $-( \delta + \bdel)$) when $x(0) =0$.

{\it Step 4:} We shall show the exponential stability of subsystem $y(t)$  when $|x(0)|>0$, namely, $V(x(0), 0) \ge c_1 |x(0)|> 0$. 
Recall that both   $\EE V(x(t), t)$ and $\EE \tilde{V} (y(t), t)$ are right-continuous on $  [0, \infty)$, which could only   jump at impulse times $\{ t_{k+1} \}_{k \in \NN}$.  Define a function $\bar{v}: \RR_+ \to \RR$ as 
\begin{eqnarray}   \label{def-barv}
  \bar{v}(t) =\frac{ ( \tilde{\al}_1 \vee \tilde{\bt}_1) } { \delta} \EE V(x(t), t) - \EE \tilde{V} (y(t), t)  \quad  \forall \, t \in [0, \infty)
\end{eqnarray}
 with initial value 
\begin{equation*} 
 \bar{v}(0) = \frac{ ( \tilde{\al}_1 \vee \tilde{\bt}_1) } { \delta} V(x(0), 0) - \tilde{V}(y(0), 0).
\end{equation*}
Due to the properties of $\EE V(x(t), t)$ and $\EE \tilde{V} (y(t), t)$,  $ \bar{v}(t)$ is right-continuous on  $  [0, \infty)$ and could only   jump at impulse times $\{ t_{k+1} \}_{k \in \NN}$. 
Given any $t \ge 0$, either $\bar{v}(t) \ge 0$ or $ \bar{v}(t) < 0$. So the interval $[0, \infty)$ is broken into a disjoint union of subsets $T_+ \cup T_-$, where
\begin{equation}   \label{def-S+S-}
  T_+ = \{ t \ge 0:  \bar{v}(t) \ge 0 \}, \;\;   T_- = \{ t \ge 0:  \bar{v}(t) < 0 \}.
\end{equation}
Obviously,  from (\ref{EV_0_t_exp}), one has 
\begin{equation}
   \EE \tilde{V} ( y(t), t )    \le \frac{ ( \tilde{\al}_1 \vee \tilde{\bt}_1) } { \delta}  \EE V( x(t), t)     \le  \frac{ ( \tilde{\al}_1 \vee \tilde{\bt}_1) } { \delta}   \, V(x(0), 0) \, e^{- \bar{\al} t} \quad  \forall \,  t \in T_+.   \label{EtV-tau0}
\end{equation}

Notice that $V(x(0), 0)>0$. One has $\bar{v}(0)  >0$ if $\tilde{V}(y(0), 0)=0$; othewise,  one can always choose  a sufficiently small $  \delta       $ such that 
$$
  0 < \delta  <   ( \tilde{\al}_1 \vee \tilde{\bt}_1)  \frac{V(x(0), 0)}{\tilde{V}(y(0), 0)}
$$
 and, hence, $\bar{v}(0)  >0$.  
Without loss of generality,  assume that  $\bar{v}(0)  >0$.   Therefore,  $\bar{v} (t) >0$ on $[0, \epsilon)$ for some    $\epsilon >0$, i.e., $[0, \epsilon) \subset T_+$. If $T_+ = [0, \infty)$ (i.e., $T_- =\emptyset$), by (\ref{EtV-tau0}),  the proof is complete.
 Otherwise, let us consider the right-continuous process  $\EE \tilde{V} ( y(t), t ) $ on the subset $T_-$.
Due to the right-continuity of $\bar{v} (t)$ on $  [0, \infty)$,   $\forall \, \bar{t} \in T_-$, there exists an ordered pair   $ \tau_1 (\bar{t})< \tau_2 (\bar{t}) $ such that  
\begin{equation}    \label{def-tau_1-tau_2}
  \bar{t}   \in \left[ \tau_1 (\bar{t} \,), \tau_2 (\bar{t} \,) \right)  \; {\rm and}  \;   \left( \tau_1 (\bar{t} \,), \tau_2(\bar{t}\,) \right) \subset T_-,
\end{equation}
 where  
 $\tau_1 (\bar{t}) =  \inf \{   \underline{\tau}  \le \bar{t}:    \bar{v}(\tau) <  0,  \forall \tau \in [ \,  \underline{\tau}, \bar{t}  \, ] \} $ and $
  \tau_2 (\bar{t} ) =      \sup \{   \bar{\tau} > \bar{t} :   \bar{v}(\tau) <  0,  \forall \tau \in [ \,  \bar{t}, \bar{\tau}  ) \} $.   
For convenience,   we also write $\tau_1 =  \tau_1 (\bar{t} ) $ and $\tau_2 =  \tau_2 (\bar{t} ) $ when there is no ambiguity. 
The right-continuous   $\EE \tilde{V} ( y(t), t ) $ on the interval $[ \tau_1 , \tau_2 )$   falls  into one of the three categories:
\begin{enumerate} 
\item[(C0)] There is no impulse time on $[ \tau_1, \tau_2)$.
\item[(C1)]  There is exactly one impulse time  on $[ \tau_1, \tau_2)$.
\item[(C2)]   There are at least two impulse times  on $[ \tau_1, \tau_2)$.
\end{enumerate}

Each of them is considered  as follows.

(C0) There is $k \in \NN$ such that $t_k <  \tau_1 < \tau_2 \le t_{k+1}$.  Since $\bar{v}(t)$ is right-continuous and could only  jump  at  impulse times $\{ t_{k+1} \}_{k \in \NN}$, that $t_k <  \tau_1 < \tau_2 \le t_{k+1}$  implies that  $\bar{v} (t) $ is continuous on $t= \tau_1$ and, hence, by (\ref{def-tau_1-tau_2}),  $\bar{v} (\tau_1) =0$. This means 
\begin{equation}
  \EE \tilde{V} ( y(\tau_1), \tau_1 )   =  \frac{ ( \tilde{\al}_1 \vee \tilde{\bt}_1) } { \delta}  \EE V( x(\tau_1), \tau_1 )  
   \le   \frac{ ( \tilde{\al}_1 \vee \tilde{\bt}_1) } { \delta}    \, V(x(0), 0) \, e^{- \bar{\al} \tau_1}.  \label{EtV-tau1-I}
\end{equation}
Using the Gronwall inequality, one can derive from   (\ref{EtV_t}),   (\ref{dEtV-dt}) and (\ref{ELtV-tk-tk+1}) that 
\begin{equation}
    \EE \tilde{V} ( y(t), t ) \le e^{ ( \tilde{\al}_2 + \delta ) \, (t - \tau_1) } \EE \tilde{V}( y(\tau_1), \tau_1 )       \label{EtV-tau1-tau2-I}
\end{equation}
for all $t \in [\tau_1, \tau_2)$. Notice that 
$   \tau_2 - \tau_1 < t_{k+1} -t_k \le  \overline{\Delta t} $.  
Substitution of  (\ref{EtV-tau1-I}) into (\ref{EtV-tau1-tau2-I}) leads to
\begin{multline}
    \EE \tilde{V} ( y(t), t )   <    e^{ (  \tilde{\al}_2 + \delta ) \, \overline{\Delta t} } \EE \tilde{V}( y(\tau_1), \tau_1 )       \le    \frac{ ( \tilde{\al}_1 \vee \tilde{\bt}_1) } { \delta}   \,  e^{  (  \tilde{\al}_2 + \delta )  \, \overline{\Delta t} } \, V(x(0), 0)  \, e^{- \bar{\al} \tau_1}        \\
      \le   \frac{ ( \tilde{\al}_1 \vee \tilde{\bt}_1) } { \delta}     \,  e^{ (  \bar{\al} + \tilde{\al}_2 + \delta ) \, \overline{\Delta t} } \,V(x(0), 0)  \, e^{- \bar{\al} t}   \quad \forall \,  t \in [\tau_1, \tau_2).     \label{EtV-tau1-t-tau2-I}
\end{multline} 

(C1) There is exactly one impulse time $t_k$ on $[ \tau_1, \tau_2)$, where  $k$ is some positive integer since $[0, \epsilon) \subset T_+$. There are two cases:  (C1a) $\tau_1 < t_k$ and  (C1b) $\tau_1 = t_k$.

(C1a) There is  $k \ge 1$ such that $ t_{k-1} < \tau_1 < t_k < \tau_2 \le t_{k+1}$. As above, this means  that $\bar{v} (t) $ is continuous on $t= \tau_1$ and  $\bar{v} (\tau_1) =0$.  So one has   (\ref{EtV-tau1-I}) at $t=\tau_1$ and  (\ref{EtV-tau1-t-tau2-I})  for all $t \in [ \tau_1, t_k)$. But, from (\ref{EtV-tk+1}) and (\ref{def-delta}),  one has
\begin{multline}
     \EE \tilde{V} ( y(t_k), t_k )   \le    (\tilde{\bt}_2 + \delta) \EE \tilde{V} ( y(t_k^-), t_k^- )     \le  (\tilde{\bt}_2 + \delta) e^{ ( \tilde{\al}_2 + \delta ) \, (t_k - \tau_1) } \EE \tilde{V}( y(\tau_1), \tau_1 )      \\
     \le   (\tilde{\bt}_2 + \delta) e^{ ( \tilde{\al}_2 + \delta ) \, \overline{\Delta t}  } \EE \tilde{V}( y(\tau_1), \tau_1 )      \le   e^{ - \bdel \, \overline{\Delta t} } \EE \tilde{V}( y(\tau_1), \tau_1 ).     \label{EtV-tk-II}
\end{multline}  
By the Gronwall inequality,  inequalities (\ref{EtV_t}),  (\ref{dEtV-dt}), (\ref{ELtV-tk-tk+1}) and (\ref{EtV-tk-II}) produce
\begin{multline}
    \EE \tilde{V} ( y(t), t )  \le  e^{( \tilde{\al}_2 + \delta ) (t - t_k)}  \EE \tilde{V} ( y(t_k), t_k )    \le     e^{(  \tilde{\al}_2  + \delta ) \, \overline{ \Delta t} }  \EE \tilde{V}( y(t_k), t_k)     \le   e^{(  \tilde{\al}_2  + \delta -\bdel ) \, \overline{ \Delta t} }   \EE \tilde{V}( y(\tau_1), \tau_1 )    \label{EtV-tk-tau2j+2-II}
\end{multline}
for all $t \in [t_k, \tau_2)$. Therefore, when $\tau_1 < t_k < \tau_2$, combination of (\ref{EtV-tau1-I}), (\ref{EtV-tau1-t-tau2-I}) and (\ref{EtV-tk-tau2j+2-II}) imply that (\ref{EtV-tau1-t-tau2-I}) holds for  all $t \in [\tau_1, \tau_2)$.

(C1b) By  the property of $\bar{v}(t)$ and definition of $\tau_1$,  that $\tau_1 = t_k$ implies $\bar{v} (t_k^-) \ge 0$ and hence
\begin{equation}    \label{EVy-0}
     \EE \tilde{V} ( y(t_k^-), t_k^- ) \le  \frac{ ( \tilde{\al}_1 \vee \tilde{\bt}_1) } { \delta}  \EE V( x(t_k^-), t_k^- ). 
\end{equation}
Combining (\ref{EtV_tk-1}), (\ref{EV_0_t_exp}) and (\ref{EVy-0}) produces
\begin{equation}  
  \EE \tilde{V} ( y(\tau_1  ), \tau_1  )  
    \le     \big( \tilde{\bt}_1 + \frac{ ( \tilde{\al}_1 \vee \tilde{\bt}_1)  } { \delta} \,\tilde{\bt}_2  \big) \, \EE V( x(t_k^-), t_k^- )      \le    \big( \tilde{\bt}_1 + \frac{ ( \tilde{\al}_1 \vee \tilde{\bt}_1)  } { \delta} \,\tilde{\bt}_2 \big)    \, V(x(0), 0)  \, e^{- \bar{\al} \tau_1 }  
     \label{EtV-tau1-II-ii}
 \end{equation}
and hence 
\begin{multline*}         
   \EE \tilde{V} ( y(t  ), t  )    \le   e^{ ( \tilde{\al}_2 + \delta ) \, (t-\tau_1) }    \EE \tilde{V} ( y(\tau_1  ), \tau_1  )   \le   \big( \tilde{\bt}_1 + \frac{ ( \tilde{\al}_1 \vee \tilde{\bt}_1) } { \delta} \,\tilde{\bt}_2 \big) \,  e^{ ( \tilde{\al}_2 + \delta ) \, (t-\tau_1) } \,\EE V( x(\tau_1^-), \tau_1^- )     \\ 
    \le   \big( \tilde{\bt}_1 + \frac{ ( \tilde{\al}_1 \vee \tilde{\bt}_1) } { \delta} \,\tilde{\bt}_2 \big) \,    e^{ ( \tilde{\al}_2 + \delta ) \,\overline{\Delta t} } \,V(x(0), 0)  \, e^{- \bar{\al} \tau_1}     \le    \big( \tilde{\bt}_1 + \frac{ ( \tilde{\al}_1 \vee \tilde{\bt}_1) } { \delta} \,\tilde{\bt}_2 \big) \,    e^{ (  \bar{\al}+\tilde{\al}_2 + \delta ) \,\overline{\Delta t} } \,V(x(0), 0)  \, e^{- \bar{\al} t}     
\end{multline*}
for all $t \in [\tau_1, \tau_2)$. 
This combined  with (\ref{EtV-tau1-t-tau2-I}) yields 
\begin{equation}
   \EE \tilde{V} ( y(t  ), t  )    \le  K  \,V(x(0), 0)  \, e^{- \bar{\al} t}  \quad \forall \,  t \in [\tau_1, \tau_2)    \label{EtV-II}
\end{equation}
 when  there is only one impulse time on the interval $ [\tau_1, \tau_2)$, where $K$ is a positive constant
\begin{equation}     \label{def-K} 
        K = \Big( \frac{ ( \tilde{\al}_1 \vee \tilde{\bt}_1) } { \delta} \, \vee \, \big( \tilde{\bt}_1 + \frac{ ( \tilde{\al}_1 \vee \tilde{\bt}_1) } { \delta} \,\tilde{\bt}_2 \big) \Big)   \,    e^{ (  \bar{\al}+\tilde{\al}_2 + \delta ) \,\overline{\Delta t} }.
\end{equation}

(C2) There are at least two impulse times  on $[ \tau_1, \tau_2)$.
For any two consecutive impulse times $t_k$ and $t_{k+1}$ on $[ \tau_1, \tau_2)$, using the  reasoning as above, one can derive that
\begin{equation}    \label{EtV-tk-t-III}
   \EE \tilde{V} ( y(t  ), t  )   \le  e^{ ( \tilde{\al}_2 + \delta ) \, (t-t_k) }   \,  \EE \tilde{V} ( y(t_k  ), t_k ) 
\end{equation}
for all $ t \in [t_k, t_{k+1})$ and then
\begin{multline}
 \EE \tilde{V} ( y(t_{k+1}  ), t_{k+1}  )     \le      ( \tilde{\bt}_2 + \delta )    \,  \EE \tilde{V} ( y( t_{k+1}^-  ), t_{k+1}^-  )     \le    ( \tilde{\bt}_2 + \delta )  \,e^{ ( \tilde{\al}_2 + \delta ) \, (t_{k+1}-t_k) }  \,  \EE \tilde{V} ( y(t_k  ), t_k )    \\
    \le    ( \tilde{\bt}_2 + \delta )  \,e^{ ( \tilde{\al}_2 + \delta ) \,  \overline{ \Delta t}  }  \,  \EE \tilde{V} ( y(t_k  ), t_k )    \le   e^{ - \bdel \,  \overline{ \Delta t} } \EE \tilde{V} ( y(t_k  ), t_k ).      \label{EtV-tk-tk+1-III}
\end{multline}

Denote by $  t_k < \cdots < t_{\bar{k}+1} <  \cdots$ the  impulse times  on $[ \tau_1, \tau_2)$, where $\bar{k} \ge k \ge 1$. 
Let us   consider $ \EE \tilde{V} ( y(t  ), t  )$ on the interval $[ t_k, \tau_2 )$.
By (\ref{EtV-tk-t-III}) and (\ref{EtV-tk-tk+1-III}), one obtains
\begin{equation}
   \EE \tilde{V} ( y(t  ), t  ) \le  e^{ ( \tilde{\al}_2 + \delta ) \, (t-t_{ \bar{k} }) - (\bar{k}-k) \, \bdel \, \overline{ \Delta t} }  \, \EE \tilde{V} ( y(t_k  ), t_k )  \quad \forall \, t \in [ t_{ \bar{k} },  \, t_{ \bar{k}+1}  \wedge \tau_2 )
\end{equation}
 and, therefore,   
\begin{equation}   \label{EtV-tk-t-tau-2j+2-III}
   \EE \tilde{V} ( y(t  ), t  ) \le  e^{ ( \tilde{\al}_2 + \delta + \bdel ) \,   \overline{ \Delta t} -   \, \bdel \, (t- t_k)  }  \, \EE \tilde{V} ( y(t_k  ), t_k )  \quad \forall \,  t \in [ t_k,     \tau_2 ).
\end{equation}

Recall that $0 < \bdel \le \bar{\al}$ and  $0 \le t_k - \tau_1 \le \overline{ \Delta t}$. Again, there are two cases: $\tau_1 < t_k$ and $\tau_1 = t_k$.
In the case where $\tau_1 < t_k$, from (\ref{EtV-tau1-I}), (\ref{EtV-tk-II}) and (\ref{EtV-tk-t-tau-2j+2-III}), one has
\begin{multline}
    \EE \tilde{V} ( y(t  ), t  )  \le e^{ ( \tilde{\al}_2 + \delta + \bdel ) \,   \overline{ \Delta t} -   \, \bdel \, (t- t_k)  } \, e^{ - \bdel \, \overline{ \Delta t} } \EE \tilde{V} (y(\tau_1) , \tau_1)   \\
   \le \frac{ ( \tilde{\al}_1 \vee \tilde{\bt}_1) } { \delta}    \, e^{ ( \tilde{\al}_2 + \delta + \bdel ) \,   \overline{ \Delta t} }\, V(x(0), 0) \, e^{- (\bar{\al} \tau_1 + \bdel \, \overline{ \Delta t} - \bdel  t_k) - \bdel \, t }     \\
   \le \frac{ ( \tilde{\al}_1 \vee \tilde{\bt}_1) } { \delta}    \, e^{ ( \tilde{\al}_2 + \delta + \bdel ) \,   \overline{ \Delta t} }\, V(x(0), 0) \, e^{- \bdel \,  (  \tau_1 +   \overline{ \Delta t} -  t_k) - \bdel \, t }       \le \frac{ ( \tilde{\al}_1 \vee \tilde{\bt}_1) } { \delta}     \, e^{ ( \tilde{\al}_2 + \delta + \bdel ) \,   \overline{ \Delta t} }\, V(x(0), 0) \, e^{-   \bdel \, t }    
\end{multline}
for all $t \in [ t_k,     \tau_2 )$  and then, by (\ref{EtV-tau1-t-tau2-I}), 
\begin{equation}    \label{EtV-t-III-i}
\EE \tilde{V} ( y(t  ), t  )  \le \frac{ ( \tilde{\al}_1 \vee \tilde{\bt}_1) } { \delta}     \, e^{ ( \tilde{\al}_2 + \delta + \bdel ) \,   \overline{ \Delta t} }\,  V(x(0), 0) \, e^{-   \bdel \, t }   \quad   \forall   \, t \in [ \tau_1,     \tau_2 ) .  
\end{equation}
In the other case where $\tau_1 = t_k$, substituting (\ref{EtV-tau1-II-ii}) into (\ref{EtV-tk-t-tau-2j+2-III}) gives
\begin{multline}     \label{EtV-t-III-ii}
     \EE \tilde{V} ( y(t  ), t  )  
 \le     \big( \tilde{\bt}_1 + \frac{ ( \tilde{\al}_1 \vee \tilde{\bt}_1)  } { \delta} \,\tilde{\bt}_2 \big)  \,
    e^{ ( \tilde{\al}_2 + \delta + \bdel ) \,   \overline{ \Delta t} }     V(x(0), 0)  \, e^{- ( \bar{\al} - \bdel) \, \tau_1  -   \, \bdel \, t  }    \\
   \le     \big( \tilde{\bt}_1 + \frac{ ( \tilde{\al}_1 \vee \tilde{\bt}_1)  } { \delta} \,\tilde{\bt}_2 \big)  \,
   e^{ ( \tilde{\al}_2 + \delta + \bdel ) \,   \overline{ \Delta t} }\, V(x(0), 0)  \, e^{-   \bdel \, t  }    \quad \forall \,   t \in [ \tau_1, \tau_2 ).
\end{multline}  
Combination of (\ref{EtV-t-III-i}) and (\ref{EtV-t-III-ii}) yields
\begin{equation}    \label{EtV-t-III}
     \EE \tilde{V} ( y(t  ), t  )  \le   K V(x(0), 0)  \, e^{- \bar{\delta} \,  t}    
\end{equation}
for all $t \in [ \tau_1, \tau_2 )$ on which  there are at least two impulse times, where  $K$ is the positive constant given by (\ref{def-K}).
%
From inequlities (\ref{EtV-tau1-t-tau2-I}), (\ref{EtV-II}) and (\ref{EtV-t-III}), one has
\begin{equation}    \label{EtV-t-tau1-tau2}
     \EE \tilde{V} ( y(t  ), t  )  \le   K V(x(0), 0)  \, e^{- \bar{\delta} \,  t}    \quad  \forall \, t \in T_-.
\end{equation}
 Combining (\ref{EtV-tau0}) and (\ref{EtV-t-tau1-tau2}),  one can  conclude that 
\begin{equation}    \label{EtV-t-infty}
     \EE \tilde{V} ( y(t  ), t  )  \le   K  V(x(0), 0)  \, e^{- \bar{\delta} \,  t}    \quad \forall \, t \ge 0.
\end{equation}
 By condition (\ref{LyapunovFunctions_y}), this means that subsystem $y(t)$,  the other part of the system,  is also $p$th moment exponentially stable (with Lyapunov exponent no larger than $-\bar{\delta}$) when  $|x(0)| >0$.

{\it Step 5:} We have shown the $p$th moment exponential stability of  $x(t)$ by  (\ref{EV_0_t_exp}) and  that of $y(t)$ by (\ref{EtV-0-t-exp-V0}) and (\ref{EtV-t-infty}) when $|x(0) | =0$ and $|x(0) | >0$, respectively. 

Note that   $ z (t) = [ x^T(t) \; \, y^T(t) ]^T = [ x^T(t) \;  \, 0^T ]^T + [ 0^T  \; \,  y^T(t) ]^T $
and, hence,
 $   | z(t) | \le |x(t) | + |y(t)|   $
for all $t \ge 0$. It is easy to see that
\begin{equation}   \label{z_p}
    | z(t) |^p \le ( |x(t) | + |y(t)| )^p  \le  k_p  ( |x(t) |^p + |y(t)|^p ),
\end{equation}
where $k_p=1$ when $0<p<1$ and $k_p = 2^{p-1}$ when $ p \ge 1$. Obviously, in the case where $|x(0) | =0$ and then $\EE |x(t)|^p =0$ for all $ t \ge 0$,  inequalities  (\ref{EtV-0-t-exp-V0}) and (\ref{z_p}) as well as $| z(0) | = | y(0) | $ give  
\begin{equation}
\EE |z(t)|^p \le k_p \EE |y(t) |^p \le \frac{k_p} { \tilde{c}_1}  e^{ (\tilde{\al}_2 + \delta + \bdel )\, \overline{\Delta t} } \, \tilde{V} (y(0), 0) \, e^{ -( \delta + \bdel) \, t}   
\le   \frac{k_p \tilde{c}_2} { \tilde{c}_1}  e^{ (\tilde{\al}_2 + \delta + \bdel )\, \overline{\Delta t} } |z(0)|^p e^{ -( \delta + \bdel) \, t}  
   \label{Ez-p-exp_x0} 
\end{equation} 
for all $ t \ge 0 $.    In the general case where  $|x(0) | >0$,  inequalities  (\ref{LyapunovFunctions}), (\ref{EV_0_t_exp}), (\ref{EtV-t-infty}) and (\ref{z_p}) produce
\begin{equation}    \label{Ez-p-exp}
    \EE |z (t)|^p  \le  k_p \, ( 1 \vee \bt) \, \frac{c_2}{c_1}  |x(0)|^p  \, e^{- \bar{\al} t}    + k_p \,  K \, \frac{c_2}{c_1}  |x(0)|^p  \, e^{- \bar{\delta} \,  t}    \le  \bar{K}_p |z(0)|^p e^{- \bar{\delta} \,  t}   
\end{equation}
for all $t \ge 0$, where $K$ is the positive constant given  by (\ref{def-K}) and $  \bar{K}_p$ is a positive constant  
$$
    \bar{K}_p = \frac{k_p \, c_2} {c_1} (  ( 1 \vee \bt)  + K).
$$
So  (\ref{Ez-p-exp_x0}) and (\ref{Ez-p-exp}) mean that system  (\ref{Compact-z}) (i.e., system (\ref{SiDE-xy})) is $p$th moment exponentially stable (with Lyapunov exponent no larger than $- \bdel$).  
\end{proof}

\begin{remark}
  Notice that, in Theorem \ref{Theorem_StableContinuous}, the continuous dynamics of subsystem $x(t)$ stabilizes the subsystem,  though the discrete one could destabilize it, while the discrete dynamics of subsystem $y(t)$ stabilizes the subsystem,  though the continuous one could destabilize it,  which results in the exponential stability of the both subsystems and hence that of the whole system $z(t) = [ x^T(t) \; y^T(t)]^T$.  Similarly, one can obtain  a stability criterion for the case where the impulses  stabilize the physical subsystem $x(t)$ as the continuous dynamics could destabilize it (see \cite[Theorem 2]{huang2022}) while the conditions on the subsystem $y(t)$ are kept the same as those in Theorem \ref{Theorem_StableContinuous}.
\end{remark}

Furthermore,  under  Assumption \ref{globalLipschitz},  we have the following result for the almost sure stability.
\begin{theorem}   \label{Theorem_AlmostSure}
If Assumption \ref{globalLipschitz} holds, 
then the  $p$th ($p>0$) moment exponential stability of SiDE  (\ref{Compact-z}) (i.e., system (\ref{SiDE-xy})) implies that  it is also almost surely exponentially stable. 
\end{theorem}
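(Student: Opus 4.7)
The approach I would take closely mirrors the proof template of Mao's Theorem 4.2 (p.128) referenced by the author: combine a short-time moment bound on $\sup_{s\in[k,k+1]}|z(s)|^p$ with the $p$th moment exponential decay, then invoke Borel-Cantelli. Since $p$th moment exponential stability is assumed, there exist constants $K,c>0$ such that $\EE|z(t)|^p \le K|z_0|^p e^{-ct}$ for all $t\ge 0$.

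First, I would establish a uniform short-interval bound: there exists a constant $C=C(p,L_z)>0$ such that
\begin{equation*}
\EE\bigl[\sup_{s\in[t,t+1]}|z(s)|^p\bigr] \le C\,\EE|z(t)|^p
\end{equation*}
for every $t\ge 0$. To prove this, I would first fix $t$ and partition $[t,t+1]$ by the finitely many impulse times $t_{k_0}<t_{k_1}<\cdots<t_{k_N}$ lying in this interval (finitely many since $t_k\to\infty$). On each impulse-free subinterval $[t_{k_j},t_{k_{j+1}})$, equation (\ref{SDE_z}) is a standard SDE, so applying the Burkholder-Davis-Gundy inequality together with H\"older's inequality and the global Lipschitz condition (\ref{globalLipschitz-z}) with $F(0,\cdot)=G(0,\cdot)=0$ yields $\EE\sup_{s\in[t_{k_j},t_{k_{j+1}})}|z(s)|^p \le C_1 \EE|z(t_{k_j})|^p$. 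At each impulse time, (\ref{Impulse_z}) together with $H_F(0,k)=H_G(0,k)=0$ and the Lipschitz bound gives $\EE|z(t_{k_{j+1}})|^p \le C_2\EE|z(t_{k_{j+1}}^-)|^p$. Chaining these bounds across the at most $N+1$ pieces and noting that $N$ is controlled (any interval of unit length contains only finitely many impulses because $t_k\to\infty$), we obtain the desired inequality.

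Next I would combine this short-interval bound with the $p$th moment exponential decay. For each integer $k\ge 0$,
\begin{equation*}
\EE\bigl[\sup_{s\in[k,k+1]}|z(s)|^p\bigr] \le C\,\EE|z(k)|^p \le CK|z_0|^p e^{-ck}.
\end{equation*}
Fix any $\varepsilon\in(0,c)$. By Markov's inequality,
\begin{equation*}
\PP\Bigl(\sup_{s\in[k,k+1]}|z(s)|^p > e^{-(c-\varepsilon)k}\Bigr) \le CK|z_0|^p\, e^{-\varepsilon k}.
\end{equation*}
Since $\sum_{k}e^{-\varepsilon k}<\infty$, the Borel-Cantelli lemma ensures that, almost surely, only finitely many of these events occur. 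Hence, for almost every $\omega$ there exists $k_0(\omega)$ such that $|z(s,\omega)|^p\le e^{-(c-\varepsilon)k}$ for all $s\in[k,k+1]$ and all $k\ge k_0(\omega)$, which immediately gives
\begin{equation*}
\limsup_{t\to\infty}\frac{1}{t}\ln|z(t)| \le -\frac{c-\varepsilon}{p} < 0 \quad \text{a.s.}
\end{equation*}
matching Definition \ref{AS_ExpStab}.

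The main obstacle I anticipate is the control of the supremum bound across impulse times, because unlike the purely continuous SDE case in Mao's original theorem, the trajectory has jumps. However, since the jump sizes $|H_F(z,k)|$ and $|H_G(z,k)|$ are Lipschitz-bounded by $|z|$, and since each unit interval contains only finitely many impulses (as $t_k\to\infty$ excludes accumulation), the moment bound at each jump only multiplies by a fixed constant, and an inductive/telescoping argument across the finitely many impulse sub-intervals yields a uniform constant $C$ independent of $t$. This is the one place where some care is needed; the remainder of the argument is essentially the standard Borel-Cantelli-type conversion from moment stability to pathwise stability.
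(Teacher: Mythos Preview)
Your approach is exactly what the paper intends: the paper omits the proof entirely, saying only that it ``is similar to that of \cite[Theorem 4.2, p128]{mao2007book},'' and you have correctly reconstructed that template (short-interval supremum moment bound, then Markov plus Borel--Cantelli).

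There is, however, one genuine gap in your sketch. You need the constant $C$ in the bound $\EE\bigl[\sup_{s\in[t,t+1]}|z(s)|^p\bigr]\le C\,\EE|z(t)|^p$ to be \emph{independent of $t$}, but your justification that ``$N$ is controlled (any interval of unit length contains only finitely many impulses because $t_k\to\infty$)'' gives only that $N$ is finite on each fixed interval, not that $N$ is uniformly bounded across all unit intervals. The hypothesis $t_k\to\infty$ alone does not prevent the number of impulses in $[k,k+1]$ from growing with $k$, in which case your chained constant (of order $C_1^{N+1}C_2^{N}$) would blow up and the subsequent Borel--Cantelli step would fail. To close this, you should invoke the condition $0<\underline{\Delta t}=\inf_{k}(t_{k+1}-t_k)$ that already appears in Theorem~\ref{Theorem_StableContinuous} (giving $N\le\lceil 1/\underline{\Delta t}\rceil$), or simply note that in every application of the paper the impulse times are equidistant, $t_k=k\Delta t$. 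The paper itself glosses over this point by omitting the proof, so this is a detail you are filling in rather than a divergence from the paper's argument.
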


The proof is similar to that of \cite[Theorem 4.2, p128]{mao2007book} and, therefore,  is omitted.

\section{The cyber-physical systems of numerical methods }    \label{sec:numericalCPS}

In this section, we address the  problem (I)  of fundamental importance. We  compose a hybrid model in the form of our proposed   SiDE (\ref{SiDE-xy})  to represent the tight integration of the physical system (the SDE)  and its cyber counterpart (the numerical method). This systematic representation  can serve as a canonic form of CPS models. 

Let us consider a physical system described by SDE
\begin{equation}    \label{SDE_app} 
 \mathrm{d}  x(t) = f(x(t))\mathrm{d}t + g(x(t) )\mathrm{d} B(t)   \quad \forall \,  t \ge 0 
\end{equation}
 with initial value $x(0)  \in \RR^n$, where $f: \RR^n \to \RR^n $ and $g: \RR^n \to \RR^{ n \times m}$  satisfy the global Lipschitz condition
\begin{equation} \label{globalLipschitz_app}
|f(x ) - f(\bar{x} ) | \vee |g(x )- g(\bar{x} )|    
 \le L  |x- \bar{x}| \quad  \forall \, (x, \bar{x}) \in \RR^n \times \RR^n
\end{equation}
as well as  $f(0)=0$ and $g(0)=0$ for  study of the stability problem. 
Given a fixed parameter $\theta \in [0, 1]$, the following  numerical scheme for   SDE (\ref{SDE_app})  is called  the   stochastic theta method  \cite{higham2000,higham2002, mao2015a, mao2015b}   
\begin{equation}     \label{STMethod}
 X_{k+1} = X_k + (1 -\theta) f(X_k) \Delta t + \theta f(X_{k+1}) \Delta t     + g( X_k) \sqrt{\Delta t} \, \xi (k+1)  \quad \forall  \, k \in \NN
\end{equation}
with  initial value $X_0 =x(0)$, where $\Delta t>0$ is the constant stepsize and $\sqrt{\Delta t} \, \xi (k+1)$ is the implementation of $\Delta B_k = B( (k+1) \Delta t) -B(k \Delta t)$.   The stochastic theta method for  SDEs  is one of the popular algorithms \cite{kloeden1992,higham2001}  used in the so-called actor models \cite{lee2017},  techniques driven by software modelling and simulation tools, 
to describe and compute the  physical dynamics  (\ref{SDE_app}).  Numerical method  (\ref{STMethod})  is in essence  a cyber model of the  physical system (\ref{SDE_app}), which is a translation  of    the  physical system (\ref{SDE_app}) into discretization, the symbolism in computers,  and represents the physical dynamics in  the cyber world.

When $\theta =0$, numerical scheme (\ref{STMethod}) gives the widely-used Euler-Maruyama  method. The Euler-Maruyama method applied to SDE (\ref{SDE_app}) computes approximations $X_k \approx x(t_k)$ for all $k \in \NN$, where $t_k = k \Delta t$, by setting $X_0 =x (0)$ and forming
\begin{equation}     \label{EMMethod}
 X_{k+1} = X_k +  f(X_k) \Delta t   + g( X_k) \sqrt{\Delta t} \, \xi (k+1)  \quad \forall  \, k \in \NN.
\end{equation}
Stochastic difference equations (\ref{EMMethod}),  also called discrete-time stochastic systems \cite{huang2015},  have been intensively studied over the  past a few decades of in the age of  computers. 
In practice, it is natural to form and use some continuous-time extension of the discrete approximation $\{ X_k \}$  such as \cite{higham2002,mao2015b}
\begin{equation}     \label{def-X-app}
     X(t) = \sum_{k=0}^\infty X_k \ONE_{[t_k, t_{k+1})} (t)  \quad   \forall   \, t \ge 0
\end{equation}
where $ \ONE_T$ is the indicator function of set $T$. This is a simple step process of the equidistant Euler-Maruyama approximations so its sample paths are continuous on $( t_k, t_{k+1})$ for each $k \in \NN$ and right-continuous on $[0, \infty)$. 

This paper consider the widely-used Euler-Maruyama  method (\ref{EMMethod})-(\ref{def-X-app})   the cyber system,  which is virtually a representative of    the  physical system  (\ref{SDE_app}) in the cyber world. Other numerical schemes, or say, other translations can also be employed to represent the physical system in the cyber world in future work. 
This section is to unveil the inherent relationship between a physical system and its cyber couterpart. 
Consider the  process $y(t)$ of difference between  the exact solution $x(t)$ of the physical system (\ref{SDE_app})  and the numerical solution $X(t)$ by its cyber counterpart   (\ref{EMMethod})-(\ref{def-X-app})
\begin{equation}     \label{def-y-app}
    y(t) = x(t) -  X(t)  \quad  \forall \,  t \ge 0
\end{equation} 
with initial value $y(0)=x(0)-X(0) =0$. Notice that $x(t)$ is a  process of continuous paths and $X(t)$  a simple step process. This implies  that $y(t)$ is right-continuous  on $[ 0, \infty) $ and could only   jump at   $\{ t_{k+1} \}_{k \in \NN}$. According to the  scheme (\ref{EMMethod})-(\ref{def-X-app}),  the jump of $y(t)$ at $t=t_{k+1}$ is
\begin{multline}
   y(t_{k+1}) - y(t_{k+1}^-)   =  x(t_{k+1}) - X(t_{k+1}) - \left(  x(t_{k+1}^-) - X(t_{k+1}^-) \right)    \\
 =  X(t_{k+1}^-) - X(t_{k+1})    = -   f(X_k) \Delta t  - g( X_k) \sqrt{\Delta t} \, \xi (k+1)      \\
    =  - f( X(t_{k+1}^-) ) \Delta t - g( X(t_{k+1}^-) ) \sqrt{\Delta t} \, \xi (k+1)     \\
    =  - f( x(t_{k+1}^-) - y(t_{k+1}^-)) \Delta t    - g( x(t_{k+1}^-) -y(t_{k+1}^-) ) \sqrt{\Delta t} \, \xi (k+1)   \quad  \forall \, k \in \NN
\end{multline}
 since
  $ X(t) = x(t) - y(t) = X_k $
for all $t  \in [ t_k, t_{k+1})$ and $k \in \NN$. 

The integrated dynamics of the physical system (\ref{SDE_app}) and the process (\ref{def-y-app}) of difference  is described by the following hybrid system in the form of SiDEs
\begin{subequations}  \label{SiDE-xy_app}
\begin{align}
& \mathrm{d}  x(t) = f (x(t) )\mathrm{d}t + g(x(t)  )\mathrm{d} B(t)   \label{SDE_x_app} \\
& \d y(t)  = f(x(t) )\mathrm{d}t + g(x(t)  )\mathrm{d} B(t)  
\quad t \in [t_k, t_{k+1})     \label{SDE_y_app}   \\ 
&\tilde{\Delta} (x(t_{k+1}^-), y(t_{k+1}^-), k+1) := y(t_{k+1}) - y(t_{k+1}^-)  \nonumber  \\
 & \qquad   {}       = - f (x(t_{k+1}^-)-y(t_{k+1}^-)) \Delta t     - g (x(t_{k+1}^-)- y(t_{k+1}^-)) \sqrt{\Delta t} \, \xi (k+1)  \quad \, k \in \NN
\label{Impulse_y_app} 
\end{align}
\end{subequations}
 with initial data $x(0)  \in \RR^n$ and $ y(0)=   x(0)-X(0)  =0 $. 
Clearly, this is a special  case of  our proposed  impulsive system  (\ref{SiDE-xy}) in which $h_f (\cdot, \cdot) \equiv 0$ and $h_g (\cdot, \cdot) \equiv 0$, that is,  no impulse is imposed on the physical subsystem (\ref{SDE_x_app}). 
We construct the CPS  (\ref{SiDE-xy_app}) of the widely-used Euler-Maruyama  method (\ref{EMMethod})-(\ref{def-X-app}) for the SDE  (\ref{SDE_app}), which  is a seamless, fully synergistic integration of the physical system (\ref{SDE_app})  and its cyber counterpart (\ref{EMMethod})-(\ref{def-X-app}).  The CPS   not only provides a holistic view of the physical system and its cyber counterpart  but also reveals their intrinsic relationship that they are not two separate systems but the components of an integrated system.

Recall that the SDE describes our knowledge of  the physical dynamics while the numerical method is the cyber representive, namely, the translation of our  knowledge  in the cyber world.  As a result, the CPS  (\ref{SiDE-xy_app})  is the combination of  our knowledge of the physical system and the cyber  representative  as well as the  simulation sequence $\{ \xi (k) \}_{k \in \NN} $.  Moreover, the CPS   clearly shows that the numerical solution is driven by the SDE and the numerical method as well as the  simulation sequence while the exact solution  is, of course, conducted by the SDE itself only.  
Usually, to control the underlying physical system,   our knowledge and the cyber representive of the system are  both utilized in the synthesis of the CPS. This leads to the resulting CPS with $y(t)$ involved in the dynamics equation of $x(t)$ as well, which is considered in the study of stabilization  problems \cite{huang_partII}. 

Notice that the   physical system (\ref{SDE_app}) is subject to no impulse and  its  cyber-physical model (\ref{SiDE-xy_app}) is a specific case of our proposed canonic form   (\ref{SiDE-xy}) of CPSs in which   $q =n$, $f(x, t) = f(x)$, $g(x, t) = g(x)$, $\tilde{f} (x, y, t)= f (x)$,  $\tilde{g} (x, y, t)= g (x)$, $h_f (x, k) =0$, $h_g (x, k) =0$, $\tilde{h}_f (x, y, k) =-f (x-y) \Delta t $, $\tilde{h}_g (x, y, k) = -g (x- y) \sqrt{\Delta t} $ and $t_k =k \Delta t$. 
Consequently, the infinitesimal generators (\ref{LV}) and (\ref{tLtV})  associated with (\ref{SDE_x_app}) and (\ref{SDE_y_app}) are of the specific forms 
\begin{eqnarray}    
 \ll V (x) =   V_x (x) f(x ) + \frac{1}{2} {\rm trace} \left[ g^T (x) V_{xx} (x) g(x) \right] ,   \label{LV_1-x}  \\
   \ll  \tilde{V} (x, y) 
   =  \tilde{V}_y (y) f(x) + \frac{1}{2} {\rm trace} \left[ g^T (x) \tilde{V}_{yy} (y) g(x) \right],  \label{LV_1-y}
\end{eqnarray}
respectively. It is easy to see that  Assumption 1    holds since both $f$ and $g$ satisfy the global Lipschitz condition (\ref{globalLipschitz_app}). Accordint to Lemma \ref{existence_n_uniqueness}, there exists a unique (right-continuous) solution  to SiDE (\ref{SiDE-xy_app}) on $t \ge 0$ and the solution belongs to  $\mm^2 ([0, T]; \RR^{2n})$ for all $T \ge 0$.  Moreover,   the results of our established stability theory  for the general class (\ref{SiDE-xy}) of  SiDEs,  say, Theorem \ref{Theorem_StableContinuous} and Theorem \ref{Theorem_AlmostSure}  apply to the CPS   (\ref{SiDE-xy_app}).

\section{Exponential stability of the cyber-physical systems}   \label{sec:numericalSDEs}

 The  CPS  (\ref{SiDE-xy_app}) of the Euler-Maruyama method (\ref{EMMethod})-(\ref{def-X-app}) for the SDE   (\ref{SDE_app})  consists of the physical  subsystem and the cyber subsystem. The key questions   (Q1) and (Q2) naturally arise.
In this section, we address the problem (II) of fundamental importance and prove positive results to the  key questions (Q1) and (Q2). These fundamental results and their applicaton to linear systems comprise a foundational theory  of the CPSs of numerical methods for SDEs.  

Let us  begin with   the test problem (Q1)  of  the CPS   (\ref{SiDE-xy_app}), to which    Theorem \ref{Theorem_StableContinuous} and Theorem \ref{Theorem_AlmostSure} can be directly applied.
Under some conditions (see \cite{gikhman1996, khas2012}),  a seminal converse Lyapunov theorem \cite[Theorem 5.12, p172]{khas2012} states that, if the SDE  (\ref{SDE_app})  is $p$th moment exponentially stable, there is a Lypunov function that proves the exponential stability of the dynamical system. One may postulate that  the Lyapunov function for the physical subsystem  (\ref{SDE_x_app}) could be used to construct  a candidate Lyapunov function for  the subsystem (\ref{SDE_y_app},\ref{Impulse_y_app}) 
due to their interrelation. 
The direct application of Theorem \ref{Theorem_StableContinuous}   to the CPS (\ref{SiDE-xy_app})  shows that the CPS  (\ref{SiDE-xy_app}) and hence the cyber system  (\ref{EMMethod})-(\ref{def-X-app})  share  the  exponential stability with the physical  system (\ref{SDE_app}).  

\begin{theorem}   \label{Theorem_StableContinuous_app}
 Let $V \in C^2(\RR^n ; \RR_+)$   
be a  candidate Lyapunov function for both subsystems  (\ref{SDE_x_app}) and  (\ref{SDE_y_app},\ref{Impulse_y_app})  and, moreover,
\begin{equation}      \label{LyapunovFunctions_app}
  c_1 |x|^p \le V(x ) \le c_2 |x|^p   \quad \forall \,  x \in \RR^n
\end{equation}
  for   some positives $p, c_1, c_2$. Assume that there are positives $\al$, $\tilde{\al}_1$, $\tilde{\al}_2$,   $\tilde{\bt}_1$, $\tilde{\bt}_2$ such that 
\begin{subequations}
\begin{align}
 &   \ll V(x ) \le - \al V(x)      \quad    \forall \,  x \in \RR^n     \label{LV_app-x}   \\     
 & \tll V (x, y)  \le \tilde{\al}_1 V(x)  + \tilde{\al}_2 V (y)   \quad    \forall \,  t \in [t_k, t_{k+1})    \label{LV_app-y} \\ 
 &    \EE V ( y + \tDel (x , y, k+1)) \le \tilde{\bt}_1   V (x)  + \tilde{\bt}_2  V (y)    \quad \forall \, (x, y) \in \RR^n \times \RR^n  \label{EV_tk-app}
\end{align}
\end{subequations}
for all   $k \in \NN$. If the stepsize 
\begin{equation}  \label{ImpulseInterval_app}
              \Delta t < \frac{- \ln \tilde{\bt}_2} { \tilde{\al}_2} ,
\end{equation}
then the CPS (\ref{SiDE-xy_app}) is $p$th moment exponentially stable and is also    almost surely exponentially stable. Moreover, the cyber system  (\ref{EMMethod})-(\ref{def-X-app}) 
 shares the $p$th moment exponential stability with its underlying physical  system  (\ref{SDE_app}) and, hence, it is also almost surely exponentially stable.
\end{theorem}

\begin{proof}
 From Theorem  \ref{Theorem_StableContinuous} and Theorem \ref{Theorem_AlmostSure}, it   follows that CPS  (\ref{SiDE-xy_app}), which is a specific case of system (\ref{SiDE-xy}) (namely,    (\ref{Compact-z})),  is $p$th moment exponentially stable and is also almost surely exponentially stable.

Notice that the state of cyber system  (\ref{EMMethod})-(\ref{def-X-app}) is the difference  $X(t) = x(t) - y(t)  $ of the subsystems  (\ref{SDE_x_app}) and  (\ref{SDE_y_app},\ref{Impulse_y_app}).   Therefore,
\begin{equation}    \label{Xp-zp}
 |X(t)|^p \le k_p ( |x(t) |^p + |y(t)|^p)  \le 2 k_p |z(t) |^p \quad  \forall \, t \ge 0
\end{equation}
where $k_p=1$ if $0<p<1$, and $k_p = 2^{p-1}$ if $ p \ge 1$. 
  This immediately implies that  the cyber system  (\ref{EMMethod})-(\ref{def-X-app}) is $p$th moment exponentially stable and  is also almost surely exponentially stable.
\end{proof}

 This means that, if the underlying physical system  (\ref{SDE_app}) is $p$th moment exponentially stable, the CPS  (\ref{SiDE-xy_app})  and, hence, the numerical method  (\ref{EMMethod})-(\ref{def-X-app})  reproduce the $p$th moment exponential stability of the physical dynamics  when the conditions in Theorem \ref{Theorem_StableContinuous_app} hold. 
The ability of the cyber system (the numerical method) to reproduce the mean-square exponential  stability of its underlying physical system (the SDE) has been studied in \cite{higham2000,higham2003}, 
Here, we investigate on the ability of the cyber system  (\ref{EMMethod}-\ref{def-X-app})  to reproduce the mean-square exponential  stability of the physical system   (\ref{SDE_app}) in our proposed framework of CPS  (\ref{SiDE-xy_app}).
A result on  mean-square exponential stability is then derived from Theorem \ref{Theorem_StableContinuous_app}  as follows, in which the Lyapunov function for mean-square exponential stability of the underlying physical system  (\ref{SDE_app}) also proves  the mean-square exponential stability  of its cyber counterpart   (\ref{EMMethod})-(\ref{def-X-app}) as well as  that of the CPS  (\ref{SiDE-xy_app}).

\begin{theorem}   \label{Theorem-MeanSquare}
Let the candidate Lyapunov function $V \in C^2(\RR^n ; \RR_+)$ for physical system  (\ref{SDE_app}) be a quadratic function
\begin{equation}   \label{LyapunovFunction-MeanSquare}
   V (x) = x^T P x
\end{equation}
for some positive definite matrix $P  \in \RR^{n \times n}$.
Assume there are  positives  $\bal $ and $\overline{\Delta t} $ with $\bal \, \overline{\Delta t} <1$  such that 
\begin{equation}
     \ll V(x ) + \overline{\Delta t} \, V(f(x))  \le - \bal V(x)   \label{LV-MeanSquare}   
\end{equation}
for all $x \in \RR^n$. Then the  CPS  (\ref{SiDE-xy_app}) with $\Delta t \in (0, \overline{\Delta t} \,]$ is mean-square exponentially stable and   is also almost surely exponentially stable. Moreover, the cyber  system  (\ref{EMMethod})-(\ref{def-X-app}) with  $\Delta t \in (0, \overline{\Delta t} \,]$  shares the mean-square exponential stability with its underlying physical system  (\ref{SDE_app}) and, hence, it is also almost surely exponentially stable.
\end{theorem}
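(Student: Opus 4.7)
The plan is to derive Theorem \ref{Theorem-MeanSquare} from Theorem \ref{Theorem_StableContinuous_app} applied with $p=2$ and the same quadratic Lyapunov function $V(x)=x^T P x$ on both subsystems of CPS/SiDE (\ref{SiDE-xy_app}); the almost sure exponential stability conclusion then follows from Theorem \ref{Theorem_AlmostSure}. The spectral bound on $V$ gives (\ref{LyapunovFunctions_app}) with $c_1=\lambda_m(P)$ and $c_2=\lambda_M(P)$, and since $V(f(x))\ge 0$ the hypothesis (\ref{LV-MeanSquare}) immediately yields $\ll V(x)\le -\bal V(x)$, which is (\ref{LV_app}) with $\alpha=\bal$. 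For (\ref{tLtV_app}), starting from (\ref{tLtV_1}) and applying Young's inequality $2y^T P f(x)\le \lambda V(y)+\lambda^{-1} V(f(x))$ with a free parameter $\lambda>0$, together with the global Lipschitz estimates $V(f(x)),\,{\rm trace}(g(x)^T P g(x))\le c_L V(x)$ where $c_L:=L^2\lambda_M(P)/\lambda_m(P)$, one obtains $\tll V(x,y)\le \tilde\alpha_1 V(x)+\lambda V(y)$, so $\tilde\alpha_2=\lambda$.

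The main work, and the expected main obstacle, is verifying the jump condition (\ref{EV_tk-app}) with $\tilde\beta_2<1$. Writing $w=x-y$ and using the independence of $\xi(k+1)$ together with the moment identities for the standard normal,
\begin{equation*}
\EE V(y+\tDel)=V(y-f(w)\Delta t)+\Delta t\,{\rm trace}(g(w)^T P g(w)).
\end{equation*}
Expanding the square and using the identity $\ll V(w)=2w^T P f(w)+{\rm trace}(g(w)^T P g(w))$ reorganizes this as
\begin{equation*}
\EE V(y+\tDel)=V(y)-2\Delta t\,x^T P f(w)+\Delta t\bigl[\ll V(w)+\Delta t\,V(f(w))\bigr].
\end{equation*}
Since $\Delta t\le\overline{\Delta t}$ and $V(f(w))\ge 0$, the bracket is bounded by $-\bal V(w)$ via hypothesis (\ref{LV-MeanSquare}). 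Expanding $V(w)=V(x)-2x^T P y+V(y)$ and applying the \emph{asymmetric} Young's inequality $2x^T P y\le \epsilon V(x)+\epsilon^{-1} V(y)$ with $\epsilon>1$ to the recovered cross term, together with the bound $-2\Delta t\,x^T P f(w)\le \mu\Delta t\,V(x)+\mu^{-1}\Delta t\,V(f(w))$ and $V(f(w))\le 2c_L(V(x)+V(y))$ for a sufficiently large $\mu$, the negative $-\bal\Delta t\,V(y)$ contribution survives. Taking for instance $\epsilon=2$ and $\mu=8c_L/\bal$ yields
\begin{equation*}
\EE V(y+\tDel)\le (1-\delta\Delta t)V(y)+C\Delta t\,V(x),\qquad \delta=\bal/4,
\end{equation*}
for some explicit $C>0$, hence $\tilde\beta_2=1-\delta\Delta t<1$ and $\tilde\beta_1=C\Delta t$.

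Finally, fixing $\lambda\le\delta$ in the estimate of $\tilde\alpha_2$, the stepsize condition (\ref{ImpulseInterval_app}) becomes $\lambda\Delta t<-\ln(1-\delta\Delta t)$, which holds for every $\Delta t\in(0,\overline{\Delta t}]$ because $-\ln(1-u)>u$ on $(0,1)$ and $\delta\Delta t\le\bal\overline{\Delta t}/4<1/4$ by the hypothesis $\bal\overline{\Delta t}<1$. Theorem \ref{Theorem_StableContinuous_app} then delivers the mean-square exponential stability of CPS/SiDE (\ref{SiDE-xy_app}) as well as that of the cyber subsystem $X(t)=x(t)-y(t)$ (since $|X(t)|^2\le 2(|x(t)|^2+|y(t)|^2)$), and Theorem \ref{Theorem_AlmostSure} upgrades both to almost sure exponential stability. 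The delicate point throughout is that the naive expansion of $\EE V(y+\tDel)$ gives a coefficient of exactly $1$ in front of $V(y)$, so the strict contraction $\tilde\beta_2<1$ can only be extracted by treating $\ll V(w)+\Delta t\,V(f(w))$ as a single package (to invoke (\ref{LV-MeanSquare})) and then absorbing the cross term produced by re-expanding $V(w)$ via an asymmetric Young's splitting biased toward $V(x)$.
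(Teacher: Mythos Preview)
Your proof is correct and follows the same approach as the paper: verify the hypotheses of Theorem~\ref{Theorem_StableContinuous_app} with $p=2$ by packaging $\ll V(w)+\Delta t\,V(f(w))$ (with $w=x-y$) so as to invoke (\ref{LV-MeanSquare}), and then control the remaining cross terms via Young's inequality with free parameters before checking (\ref{ImpulseInterval_app}). Your algebraic organization is slightly cleaner---you keep the exact identity $\EE V(y+\tDel)=V(y)-2\Delta t\,x^TPf(w)+\Delta t\bigl[\ll V(w)+\Delta t\,V(f(w))\bigr]$ and extract the $-\bal\Delta t\,V(y)$ contribution by expanding $V(w)$, whereas the paper first applies a Young's bound to $x^TPx-2x^TP(x-y)+(x-y)^TP(x-y)$ with an extra parameter $c$ before reaching the same package---but the argument and the key ideas are otherwise identical.
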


\begin{proof}  It will follow the conclusion from Theorem \ref{Theorem_StableContinuous_app} if one  shows that  conditions  (\ref{LyapunovFunctions_app})-(\ref{ImpulseInterval_app}) of Theorem \ref{Theorem_StableContinuous_app} are satisfied with $p=2$  for the
 CPS  (\ref{SiDE-xy_app}).
Since the quadratic function   (\ref{LyapunovFunction-MeanSquare}) gives 
$ \lambda_m (P) |x|^2 \le V(x) \le \lambda_M (P) |x|^2$, 
 condition (\ref{LyapunovFunctions_app}) holds for positive constants $p=2$, $c_1 =\lambda_m (P)$, $c_2 = \lambda_M (P) $. 
Moreover,    (\ref{LV_app-x}) and  
(\ref{LV-MeanSquare}) are equivalent.  Obviously,   (\ref{LV-MeanSquare}) implies that  (\ref{LV_app-x}) holds for some positive constant $\alpha \ge \bal$. But    (\ref{LV_app-x}) implies that there is a sufficiently small positive number $ \overline{\Delta t}$ such that
 (\ref{LV-MeanSquare}) holds for some positive constant $\bal <  \alpha  \wedge ( 1 / \, \overline{\Delta t}  )$. According to \cite[Theorem 4.4, p130]{mao2007book} and  \cite[Theorem 4.2, p128]{mao2007book}, system (\ref{SDE_app}) is mean-square exponentially stable and    also almost surely exponentially stable.

By the It{\^o} lemma, \cite[Lemmas 1 and 2]{huang2010} and global Lipschitz condition (\ref{globalLipschitz_app}),
\begin{multline}    \label{tLV-MeanSquare}
   \tll V (x, y)= 2 y^T P f(x) +  trace  \big[  g^T (x) P g(x)  \big] \\ 
 \le \tilde{\al}_2 y^T P y + \tilde{\al}_2^{-1} f^T (x) P f( x)   +  \lambda_M (P) \, trace \big[  g^T(x)  g(x)  \big]     \\
  \le    \tilde{\al}_2^{-1}  \lambda_M(  P) | f(x)|^2 + \lambda_M (P) \, L^2 |x|^2  + \tilde{\al}_2 \tilde{V} (y)    \\
  \le    (  \tilde{\al}_2^{-1} +1)  \lambda_M(  P)  L^2 \, |x|^2 + \tilde{\al}_2 \tilde{V} (y)   
  \le \tilde{\al}_1 x^T P x  + \tilde{\al}_2 \tilde{V} (y)  = \tilde{\al}_1 V( x)  + \tilde{\al}_2 \tilde{V} (y),
\end{multline}
where  
$
\tilde{\al}_1 = \frac{  (1+ \tilde{\al}_2) \,  \lambda_M(  P)  L^2 }{ \tilde{\al}_2 \, \lambda_m (P) }  
$
and $\tilde{\al}_2$  given as (\ref{def-tal2}) 
below are both positive numbers. So condition  (\ref{LV_app-y})   
of Theorem  \ref{Theorem_StableContinuous_app} is satisfied. 
Note that, for any $\Delta t \in (0, \overline{\Delta t} \,]$, inequality (\ref{LV-MeanSquare}) implies
\begin{eqnarray}
     \ll V(x ) +  \Delta t \, V(f(x))  \le - \bal V(x),   \label{LV-MeanSquare-1}   
\end{eqnarray}
and (\ref{Impulse_y_app})  gives
\begin{multline*}  
   y + \tilde{\Delta} (x, y, k+1)    = y  -f  ( x  - y ) \,  {\Delta t}  - g ( x - y) \, \sqrt{\Delta t} \; \xi  (k+1)   \\
 = x -(x-y)  -f  ( x  - y ) \,  {\Delta t}  - g ( x - y) \,  
\sqrt{\Delta t} \; \xi  (k+1) .
\end{multline*}
Using inequality (\ref{LV-MeanSquare-1})  and \cite[Lemma 1]{huang2010}, 
one obtains 
\begin{multline} \label{EV-k+1-MeanSquare}
   \EE V( y+ \tilde{\Delta} (x, y, k+1))    
  = x^T P x -2 x^T P (x-y) + (x-y)^T P (x-y)   - 2  \Delta t x^T P f(x-y)    \\
  + \Delta t \Big\{ 2 (x-y)^T P f(x-y)   + trace [g^T(x-y) P g(x-y)]    + \Delta t \, f^T(x-y) P f(x-y) \Big\}     \\
  \le (1+ c^{-1}) x^T P x + (1+ c) (x-y)^T P (x-y)    - 2 \Delta t x^T P f(x-y)    + \Delta t \big[  \ll V(x-y ) +  \Delta t \, V(f(x-y)) \big]     \\
  \le  (1+ c^{-1}) V(x) + (1+c) (x-y)^T P (x-y)    - 2 \Delta t x^T P f(x-y)    - \bal \Delta t \, (x-y)^T P (x-y)       \\
   \le   (1+ c^{-1}) V(x) +  (1  +c - \bal \Delta t  )\, (x-y)^T P (x-y)     - 2 \Delta t x^T P f(x-y)      \quad \forall \, k \in \NN
\end{multline}
for all $(x, y) \in \RR^{n} \times \RR^n$, where $c $ is a positive constant  with $ c < \bal \Delta t   $ and, hence, $1+c -\bal \Delta t  <1$.  

By \cite[Lemmas 1-2]{huang2010} and global Lipschitz condition (\ref{globalLipschitz_app}), one also has
\begin{eqnarray}  
    (x-y)^T P (x-y)  & \le &  x^T P x - 2 x^T P y + y^T P y    
                             \le    (1 + b^{-1} ) x^T P x + ( 1 + b)  y^T P y,     \label{x-yPx-y} \\                               
     - 2   x^T P f(x-y)
                          &  \le  &     b^{-1} x^T P x +b f(x- y)^T P f(x-y)      
   \le     b^{-1}  x^T P x + b \lambda_M (P)  L^2 \, (x-y)^T (x-y)      \nonumber \\
 &  \le &    \big( b^{-1} + \frac{(1+b) \lambda_M (P)  L^2}{ \lambda_m (P)} \big) x^T P x    
    \; {}+ \frac{b (1+b)  \lambda_M (P)  L^2}{ \lambda_m (P)} y^T P y ,    \label{xPfx-y}
\end{eqnarray}
where $b$ is a positive constant sufficiently small for 
\begin{equation}   \label{def-tbeta2}
  \tilde{\bt}_2 :=   (1 +c - \bal \Delta t  ) (1+b) + \Delta t  \;  \frac{b (1+b)  \lambda_M (P)  L^2}{ \lambda_m (P)}  <1.
\end{equation}
Substitution of (\ref{x-yPx-y}) and (\ref{xPfx-y}) into (\ref{EV-k+1-MeanSquare}) yields
\begin{equation}
    \EE V( y+ \tilde{\Delta} (x, y, k+1))  \le \tilde{\bt}_1 V (x) + \tilde{\bt}_2 V(y) ,
\end{equation}
where 
\begin{equation*}
  \tilde{\bt}_1 =  (1+c^{-1}) + (1 +c - \bal \Delta t  ) (1+b^{-1} )  +  \Delta t   \left( b^{-1} + \frac{(1+b) \lambda_M (P)  L^2}{ \lambda_m (P)} \right)
\end{equation*}
 and $\tilde{\bt}_2$ given as (\ref{def-tbeta2}) above are both positive constants.  This is the condition (\ref{EV_tk-app}) of Theorem  \ref{Theorem_StableContinuous_app}.  

 Let  $\tilde{\al}_2$  be a positive number such that
\begin{equation}   \label{def-tal2}
    \tilde{\al}_2 < \frac{- \ln \tilde{\bt}_2} { \overline{ \Delta t}} .
\end{equation}
For instance, let  $\tilde{\al}_2 = - \ln \tilde{\bt}_2 /( 2\, \overline{ \Delta t})$ and, hence,
   $   \Delta t \le  \overline{ \Delta t}  = \frac{- \ln \tilde{\bt}_2} {2 \,\tilde{\al}_2}<  \frac{- \ln \tilde{\bt}_2} { \tilde{\al}_2} $.
 So the condition (\ref{ImpulseInterval_app}) of Theorem \ref{Theorem_StableContinuous_app} is also satisfied. 
According to Theorem \ref{Theorem_StableContinuous_app}, it follows the assertions.
\end{proof}

   In the literature \cite{higham2003,mao2015a},   to ensure that the cyber system shares the exponential stability with its underlying physical system,  the  stepsize $\Delta t$ is explicitly and severly limited by both the growth  and the rate constants of  the physical system. Although the both are related, it is only the rate constant that plays a key role in the definitions of exponential stability. It is reasonable and possible  to lessen the dependence of the stepsize $\Delta t$ on the growth constant, which itself could be very conservative  due to condition (\ref{LyapunovFunctions_app}). In Theorem \ref{Theorem-MeanSquare}, we manage to remove the explicit dependence of the stepsize $\Delta t$ on the growth constant $ \lambda_M (P) /  \lambda_m (P) $. Instead, we show that the growth constant $ \lambda_M (P) /  \lambda_m (P) $  has an   influence   on the    stepsize  $\Delta t$   through the rate-like  constant $\tilde{\bt}_2$ given by   (\ref{def-tbeta2}). This could reduce much  the restriction imposed by the growth constant.
As will be shown in Section \ref{sec:linearSDEs}, it  improves the upper bound $ \overline{ \Delta t}$ of stepsizes and eases its computation for linear systems.

Recall that Theorem \ref{Theorem_StableContinuous_app} is the direct application of Theorem \ref{Theorem_StableContinuous}   to the CPS (\ref{SiDE-xy_app}) of the numerical method (\ref{EMMethod})-(\ref{def-X-app}) for the SDE  (\ref{SDE_app}), from which Theorem \ref{Theorem-MeanSquare} is derived for the mean-square exponential stability.  
 Let us proceed to apply Theorem \ref{Theorem-MeanSquare} and study the converse question (Q2) whether one can infer that the  CPS (\ref{SiDE-xy_app}) and, hence, the physical system  (\ref{SDE_app})   are mean-square exponentially stable if the cyber system  (\ref{EMMethod})-(\ref{def-X-app}) is mean-square exponentially stable for  small stepsizes $\Delta t >0$.  
Similarly, the converse Lyapunov theorem   \cite{khas2012,vaidya2015} gives that, if the cyber system  (\ref{EMMethod})-(\ref{def-X-app})  is mean-square exponentially stable, there is a Lyapunov function that proves the exponential stability of the  system. One may make use of this Lyapunov function for the cyber system and study the stability of  the physical system and that of the whole CPS.  Applying Theorem \ref{Theorem-MeanSquare}, we find that  the mean-square exponential stability of the CPS (\ref{SiDE-xy_app}) and, hence,  that of  the  physical system  (\ref{SDE_app}) can be inferred from the mean-square exponential stability of the cyber system (\ref{EMMethod})-(\ref{def-X-app})  if the Lyapunov function is of the quadratic form.

\begin{theorem}   \label{Theorem-MeanSquare-predict}
Assume that there is a candidate Lyapunov function $V \in C^2(\RR^n ; \RR_+)$  of the quadratic form (\ref{LyapunovFunction-MeanSquare})
 for the cyber system  (\ref{EMMethod})-(\ref{def-X-app}) with stepsize $\Delta t =\overline{\Delta t} >0$ such that
\begin{equation}   \label{Lyapunov-discrete-decreasing}
   \EE \big[ V (X_{k+1}) \big| X_k \big] \le \bar{c} \, V( X_k) 
\end{equation}
for some positive constant $\bar{c}<1$ and all $X_k \in \RR^n$.
Then  CPS  (\ref{SiDE-xy_app}) with  $\Delta t \in (0, \overline{\Delta t} ]$ is mean-square exponentially stable and    also almost surely exponentially stable, which implies that physical system  (\ref{SDE_app}) is  mean-square exponentially stable and    also almost surely exponentially stable.
\end{theorem}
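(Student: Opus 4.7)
The plan is to reduce the hypothesis to condition (\ref{LV-MeanSquare}) of Theorem \ref{Theorem-MeanSquare} applied at $\Delta t = \overline{\Delta t}$, after which mean-square exponential stability of the CPS/SiDE (\ref{SiDE-xy_app}), of its physical subsystem/SDE (\ref{SDE_app}), and of its cyber subsystem/Euler--Maruyama method follows directly from that theorem, with almost sure exponential stability coming from Theorem \ref{Theorem_AlmostSure}.

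First I would compute $\EE[V(X_{k+1}) \mid X_k]$ explicitly using the quadratic form $V(x) = x^T P x$ and the EM recursion (\ref{EMMethod}) with $\Delta t = \overline{\Delta t}$. Expanding $X_{k+1}^T P X_{k+1}$ and invoking $\EE[\xi(k+1)] = 0$ together with $\EE[\xi(k+1)\,\xi(k+1)^T] = \ii_m$, the cross terms involving $\sqrt{\overline{\Delta t}}\,\xi(k+1)$ vanish in conditional expectation while the remaining ones assemble, using the symmetry of $P$ and the definition (\ref{LV_1}) of $\ll V$, into the exact identity $\EE[V(X_{k+1}) \mid X_k] = V(X_k) + \overline{\Delta t}\,[\ll V(X_k) + \overline{\Delta t}\, V(f(X_k))]$. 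This identification of the one-step second moment of the EM iterate with the "predictor-corrector" combination $\ll V + \Delta t\, V\!\circ\! f$ that drives Theorem \ref{Theorem-MeanSquare} is the crux of the whole argument.

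Combining this identity with the assumed contraction (\ref{Lyapunov-discrete-decreasing}) immediately yields $\ll V(x) + \overline{\Delta t}\, V(f(x)) \le -\bal\, V(x)$ for all $x \in \RR^n$, with $\bal := (1-\bar{c})/\overline{\Delta t} > 0$ satisfying $\bal\,\overline{\Delta t} = 1 - \bar{c} \in (0,1)$. This is precisely condition (\ref{LV-MeanSquare}) of Theorem \ref{Theorem-MeanSquare} with the same $V$ and upper stepsize $\overline{\Delta t}$. Invoking that theorem for any $\Delta t \in (0, \overline{\Delta t}\,]$ delivers mean-square exponential stability of the CPS/SiDE (\ref{SiDE-xy_app}), and hence almost sure exponential stability via Theorem \ref{Theorem_AlmostSure}; the bound (\ref{eq:independent_of_growth_app}) embedded in the proof of Theorem \ref{Theorem-MeanSquare} gives mean-square exponential stability of the physical subsystem/SDE (\ref{SDE_app}) directly, and Theorem \ref{Theorem_AlmostSure} again upgrades this to the almost sure conclusion.

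The only step requiring genuine care is the verification of the identity for $\EE[V(X_{k+1}) \mid X_k]$ in the paragraph above, but this is a routine second-moment calculation with independent standard-normal increments and a symmetric $P$, so no analytical obstacle arises. The conceptual point worth emphasizing is that, once $V$ is quadratic, the discrete-time Lyapunov contraction for the EM cyber subsystem and the continuous-time mixed condition (\ref{LV-MeanSquare}) for the physical subsystem are literally the same inequality (under the correspondence $\bar{c} = 1 - \bal\,\overline{\Delta t}$), which is why (Q2) resolves affirmatively here without any growth-constant restriction on $\Delta t$.
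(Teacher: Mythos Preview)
Your proposal is correct and follows essentially the same approach as the paper: compute $\EE[V(X_{k+1})\mid X_k]$ for the quadratic $V$ and the EM recursion at $\Delta t=\overline{\Delta t}$, recognize the resulting identity as $V(X_k)+\overline{\Delta t}\bigl[\ll V(X_k)+\overline{\Delta t}\,V(f(X_k))\bigr]$, and set $\bal=(1-\bar c)/\overline{\Delta t}$ so that (\ref{Lyapunov-discrete-decreasing}) becomes exactly condition (\ref{LV-MeanSquare}), after which Theorem~\ref{Theorem-MeanSquare} yields all the stated conclusions. Your observation that the discrete and continuous Lyapunov conditions are literally the same inequality under $\bar c = 1-\bal\,\overline{\Delta t}$ is precisely the mechanism the paper exploits.
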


\begin{proof} 
By Lyapunov stability theory   \cite{boyed1994,khas2012}, conditions (\ref{LyapunovFunction-MeanSquare}) and (\ref{Lyapunov-discrete-decreasing}) as well as (\ref{LyapunovFunctions_app}) derived from  (\ref{LyapunovFunction-MeanSquare}) immediately imply that the cyber system  (\ref{EMMethod})-(\ref{def-X-app}) with  $\Delta t = \overline{\Delta t} $ is mean-square exponentially stable. Let function $V(x)$ given by (\ref{LyapunovFunction-MeanSquare})  also be  the candidate Lyapunov function for the physical system (\ref{SDE_app}).  But condition (\ref{Lyapunov-discrete-decreasing}) 
\begin{multline*} 
  \EE \big[ V (X_{k+1}) \big| X_k \big] = \EE \big[ X^T_{k+1} P X^T_{k+1}  \big| X_k \big]     \\
  = \EE \Big[  \big( X_k +  f(X_k)  \overline{\Delta t}   + g( X_k) \sqrt{ \overline{\Delta t} } \, \xi (k+1) \big)^T  P    \big( X_k +  f(X_k)  \overline{\Delta t}   + g( X_k) \sqrt{ \overline{\Delta t} } \, \xi (k+1)  \big)  \big| X_k \Big]     \\
  = V( X_k )  +   \overline{\Delta t}  \Big[ X^T_k  P f(X_k)  + f^T (X_k) P X_k   + trace \big[ g^T(X_k ) P g (X_k )  \big]   +  \overline{\Delta t}  f^T (X_k) P f (X_k ) \Big]      \le  \bar{c} \, V( X_k )
\end{multline*}
implies that 
 $   X^T_k  P f(X_k)  + f^T (X_k) P X_k + trace \big[ g^T(X_k ) P g (X_k )  \big]    +  \overline{\Delta t}  f^T (X_k) P f (X_k )   \le - \bal V(X_k) $
for  all $X_k \in \RR^n$, where  $ \bal \,  {\overline{\Delta t}}  =  1- \bar{c}  $.
This means
\begin{equation*}
   \ll V(x ) + \overline{\Delta t} \, V(f(x)) 
  = x^T  P f(x)  + f^T (x) P x + trace \big[ g^T(x ) P g ( x )  \big]       +  \overline{\Delta t}  f^T (x) P f (x )     \le  - \bal V(x)   
\end{equation*}
for all $x \in \RR^n$, which is exactly the condition (\ref{LV-MeanSquare}) of Theorem \ref{Theorem-MeanSquare}.
According to Theorem \ref{Theorem-MeanSquare},    $z(t) = [ x^T(t) \; \, y^T(t)]^T$ of  CPS (\ref{SiDE-xy_app}) with  $\Delta t \in (0, \overline{\Delta t} ]$ is mean-square exponentially stable and is   also almost surely exponentially stable. It  immediately  follows that, due to $|x(t)|^2 \le |z(t)|^2$,   the physical  system  (\ref{SDE_app}) is  mean-square exponentially stable and is also almost surely exponentially stable. 
Alternatively,    conditions (\ref{LyapunovFunction-MeanSquare}) and
 $   \ll V(x ) \le \ll V(x ) + \overline{\Delta t} \, V(f(x)) \le   - \bal V(x) $
for all $x \in \RR^n$ as well as (\ref{LyapunovFunctions_app}) derived from  (\ref{LyapunovFunction-MeanSquare}) imply that,  by \cite[Theorems 4.2-4.4, pp128-130]{mao2007book},  the physical system (\ref{SDE_app}) is   mean-square exponentially stable and is also almost surely exponentially  stable.
\end{proof}

Our positive results to the key questions (Q1) and (Q2)   expose   the equivalence and intrinsic relationship $ \bal \,  {\overline{\Delta t}}  =  1- \bar{c}  $ between (\ref{LV-MeanSquare}) and (\ref{Lyapunov-discrete-decreasing}), which are the stability conditions for   the physical system (\ref{SDE_app}) and   its cybercounterpart  (\ref{EMMethod})-(\ref{def-X-app}), respectively. For this,  we employ the same Lyapuov function $V(x) = \tV (x) = x^T P x$ for both the   subsystems in Theorems  \ref{Theorem_StableContinuous_app}-\ref{Theorem-MeanSquare-predict}, see also Section \ref{sec:linearSDEs}.  Our proposed theory can further developed by using  some techniques of Lyapunov functions and functionals to exploit the  composition and structure  (\ref{linearSiDE-xy}), see Remark \ref{remark-upperbound} as well as \cite{fridman2014,huang2022,huang_partII,liu1988,nagh2008}.
It is also worth noting that the initial condition $y (0) = x(0) - X(0) =0$ is not a requirement in our established stability theory  and   its application in Section \ref{sec:linearSDEs}. But this condition could make a difference in the study of convergence and some control problems, see Appendix B and \cite{huang_partII}.

\section{Application to linear systems}   \label{sec:linearSDEs}

Let us consider a linear stochastic system
\begin{equation}    \label{linearSDE}
  \d x(t) = F x(t) \d t+ \sum_{j=1}^m G_j x(t) \d B_j (t)  \quad   \forall \, t \ge 0
\end{equation}
  with initial value $x(0)  \in \RR^{n}$, where $F \in \RR^{n \times n}$ and $ G_j \in \RR^{n \times n}$, $j=1, 2, \cdots, m$, are constant matrices. Obviously, the linear  system  (\ref{linearSDE}) satisfies the global Lipschitz   condition and  has a unique solution $x(t)$ on $  [0, \infty)$. It is well known that the linear  stochastic system (\ref{linearSDE}) is mean-square exponentially stable  if and only if there exists a positive definite matrix $P \in \RR^{n \times n}$ such that    \cite{boyed1994,elghaoui1995} 
\begin{equation}    \label{Lyapunov-Ito-LMI}
    F^T P + P F + \sum_{j=1}^m G_j^T P G_j < 0.   
\end{equation}
This is the classic Lyapunov-It{\^o} inequality \cite{boyed1994,elghaoui1995}, the linear matrix inequality (LMI)  equivalent \cite{gahinet1995} of the classical Lyapunov-It{\^o} equation  \cite{arnold1974, liu1992}.
By  \cite[Theorem 4.2, p128]{mao2007book}, the mean-square exponential stability of   (\ref{linearSDE}) implies that it is also almost surely exponentially stable.

The Euler-Maruyama method (\ref{EMMethod})-(\ref{def-X-app}) for the  linear  system  (\ref{linearSDE}) computes approximations
\begin{equation}     \label{linearSDE-EMMethod}
    X_{k+1} = X_k + F X_k \Delta t +  \sum_{j=1}^m G_j X_k \sqrt{\Delta t} \, \xi_j (k+1)  \quad   \forall \, k \in \NN
\end{equation}
 where $X_0 =x (0)$, $\Delta t>0$ is the constant stepsize and $\sqrt{\Delta t} \, \xi_j (k+1)$ is the implementation of $\Delta B_{j, k} = B_j ( (k+1) \Delta t) -B_j (k \Delta t) $. The cyber system (\ref{linearSDE-EMMethod}) is mean-square exponentially stable if and only if there exists a positive definite matrix $P \in \RR^{n \times n}$ such that \cite{boyed1994}
\begin{equation}     \label{Lyapunov-discrete-LMI}
   (I_n + \Delta t \, F)^T P  ( I_n + \Delta t \, F)  +  \Delta t \, \sum_{j=1}^m G_j^T P G_j <  P.
\end{equation}

Let $y(t)$ be the   difference between $x(t)$ and $X(t)$ as (\ref{def-y-app}) above. The  CPS of the Euler-Maruyama method (\ref{linearSDE-EMMethod}) for the linear SDE (\ref{linearSDE}) is  a specific   case of  CPS  (\ref{SiDE-xy_app})
\begin{subequations}  \label{linearSiDE-xy} 
\begin{align}
& \mathrm{d}  x(t) = Fx(t)  \mathrm{d}t + \sum_{j=1}^m G_j x(t) \mathrm{d} B_j (t)   \label{linearSDE_x} \\
& \d y(t)  = Fx(t) \mathrm{d}t + \sum_{j=1}^m G_j x(t) \mathrm{d} B_j (t)  \quad   t \in   [t_k, t_{k+1})  \label{linearSDE_y}    \\
& \tilde{\Delta} (x(t_{k+1}^-), y(t_{k+1}^-), k+1) := y(t_{k+1}) - y(t_{k+1}^-)  \nonumber  \\
 & \quad  {}       = - F \big( x(t_{k+1}^-)-y(t_{k+1}^-) \big) \Delta t    - \sum_{j=1}^m G_j \big( x(t_{k+1}^-)- y(t_{k+1}^-) \big) \sqrt{\Delta t} \, \xi_j (k+1)  \quad k \in \NN
\label{linearImpulse_y} 
\end{align}
\end{subequations}
with initial data $x(0) \in \RR^n$ and $ y(0)=   x(0)-X(0)  =0 $, where $t_k = k  \Delta t$ for all $k \in \NN$. The CPS (\ref{linearSiDE-xy}) is an integration of the physical  system (\ref{linearSDE}) and the cyber system  (\ref{linearSDE-EMMethod}), which is of our proposed class  (\ref{SiDE-xy})  and satisfies the global Lipschitz conditions Assumption  \ref{globalLipschitz}.    
Our established theory  immediately provides  positive results  to  the key questions  (Q1) and (Q2) of linear CPS  (\ref{linearSiDE-xy}), which presents the upper bound $\overline{\Delta t}$ of stepsizes for exponential stability.

\begin{theorem}  \label{linearSDE-equivalence}
The following are equivalent.

\begin{itemize}
\item[($\mathcal{A}$)]  
  There exists a  positive definite matrix $P \in \RR^{n \times n}$   such that the CPS Lyapunov inequality holds for some positive number  $\overline{\Delta t}$, namely,
\begin{equation}    \label{numerical-Lyapunov-Ito-LMI}
    F^T P + P F + \sum_{j=1}^m G_j^T P G_j  + \overline{\Delta t} F^T P F < 0.
\end{equation}

\item[($\mathcal{B}$)] 
The physical system (\ref{linearSDE}) is mean-square exponentially stable. 

\item[($\mathcal{C}$)] 
 The cyber system (\ref{linearSDE-EMMethod}) with  $\Delta t \in (0, \overline{\Delta t} \,]$ is mean-square exponentially stable. 

\item[($\mathcal{D}$)] 
 The CPS  (\ref{linearSiDE-xy}) with  $\Delta t \in (0, \overline{\Delta t} \,]$  is mean-square exponentially stable.
\end{itemize}

That is,   $ (\mathcal{A}) \Leftrightarrow  (\mathcal{B})  \Leftrightarrow  (\mathcal{C}) \Leftrightarrow  (\mathcal{D})$.
\end{theorem}

\begin{proof}   $  (\mathcal{A}) \Leftrightarrow  (\mathcal{B})  $.  We only need to show that the classic Lyapunov inequality  (\ref{Lyapunov-Ito-LMI}) and the CPS Lyapunov inequality (\ref{numerical-Lyapunov-Ito-LMI}) are equivalent. But  this is implied by the equivalence of the   inequalities  (\ref{LV_app-x}) and  
(\ref{LV-MeanSquare}) which has been shown in the proof of Theorem \ref{Theorem-MeanSquare}. Alternatively, it can be easily proved  as follows.
Clearly, (\ref{numerical-Lyapunov-Ito-LMI}) implies (\ref{Lyapunov-Ito-LMI}). But inequality  (\ref{Lyapunov-Ito-LMI}) implies that there is a sufficiently small positive number $ \overline{\Delta t}$ such that  (\ref{numerical-Lyapunov-Ito-LMI}) holds. Therefore,  the LMI (\ref{Lyapunov-Ito-LMI}) $  \Leftrightarrow $ the LMI (\ref{numerical-Lyapunov-Ito-LMI}).

 $  (\mathcal{A})   \Rightarrow (\mathcal{C}) \;  \&  \;  (\mathcal{D}) $. Let us consider the quadratic Lyapunov function
   $ V(x) = x^T P x $ 
for the linear system (\ref{linearSDE}). The LMI (\ref{numerical-Lyapunov-Ito-LMI}) implies that there is a positve number $\bal < 1/ \, \overline{\Delta t}$ sufficiently small for 
\begin{equation}    \label{Lyapunov-Ito-LMI_y-bal}
    F^T P + P F + \sum_{j=1}^m G_j^T P G_j  + \overline{\Delta t} F^T P F \le  - \bal P,    
\end{equation}
and the condition (\ref{LV-MeanSquare})   
holds.
 It follows from  Theorem \ref{Theorem-MeanSquare} that the CPS  (\ref{linearSiDE-xy}) and, hence, the cyber system  (\ref{linearSDE-EMMethod}) with  $\Delta t \in (0, \overline{\Delta t} \,] $ are mean-square exponentially stable.

$(\mathcal{D}) \Rightarrow  (\mathcal{B}) \; \& \;  (\mathcal{C})  $.   The CPS  (\ref{linearSiDE-xy}) is a specific case of   system (\ref{Compact-z}), where $z(t) = [ x^T(t) \; \, y^T(t) ]^T$ in the compact form. Notice that 
 $|x(t)|^2 \le | z(t) |^2 $ and $ |X(t)|^2 \le 2 (|x (t)|^2 + |y (t)|^2) \le 4 |z(t) |^2 $ for all $t \ge 0$. If $z(t) $  of the CPS  (\ref{linearSiDE-xy}) is  mean-square exponentially stable, then both $x(t)$ of its physical subsystem (\ref{linearSDE}) and $X(t)$ of its cyber subsystem (\ref{linearSDE-EMMethod})  are mean-square exponentially stable.   

$ (\mathcal{C}) \Rightarrow (\mathcal{B}) \;  \& \; (\mathcal{D})  $. Let $\Delta t = \overline{\Delta t} $. Since the cyber  system  (\ref{linearSDE-EMMethod}) is mean-square exponentially stable, there is a positive definite matrix $P \in \RR^{n \times n}$ such that the Lyapunov inequality (\ref{Lyapunov-discrete-LMI}) holds for $\Delta t = \overline{\Delta t} >0$. This implies that there is a positive number $\bar{c} \in (0, 1) $   sufficiently close to $1$ for
\begin{equation}     \label{Lyapunov-discrete-LMI-1}
    ( I_n + \overline{\Delta t} \, F)^T P  ( I_n +  \overline{\Delta t} \, F)  +  \overline{\Delta t} \, \sum_{j=1}^m G_j^T P G_j \le \bar{c} \, P.
\end{equation}
Let the quadratic  function 
    $ V(x) = x^T P x  $ 
be the candidate Lyapunov function for the cyber system (\ref{linearSDE-EMMethod}) with $\Delta t = \overline{\Delta t} $. Observe that, for the linear system,  (\ref{Lyapunov-discrete-LMI-1}) is exactly the condition (\ref{Lyapunov-discrete-decreasing})  of Theorem \ref{Theorem-MeanSquare-predict}. It follows from Theorem \ref{Theorem-MeanSquare-predict} that the CPS  (\ref{SiDE-xy_app}) with  $\Delta t \in (0, \overline{\Delta t} ]$ and, hence, the physical  system   (\ref{SDE_app}) are  mean-square exponentially stable.
The proof is complete.
\end{proof}

Note that the mean-square exponential stability of the   physical system (\ref{linearSDE}), the cyber system (\ref{linearSDE-EMMethod}) and the CPS  (\ref{linearSiDE-xy})  imply that they are also almost surely exponentially stable, respectively. 
It is easy to obtain the upper bound $\overline{\Delta t}$ of stepsizes for the ability of the cyber system to reproduce the exponential stability of the underlying linear  physical system by  solving the CPS Lyapunov  inequality   (\ref{numerical-Lyapunov-Ito-LMI}), which can also be called the numerical Lyapunov  inequality in  the study of numerical methods for differential equations. 

\begin{remark}    \label{CPS-Lyapunov-scalar}
Consider a scalar  SDE,  which is the linear SDE (\ref{linearSDE}) with $n=m=1$, 
\begin{equation}   \label{scalarSDE-linear}
   \d x(t) = \lambda x(t) \d t + \mu x(t) \d B(t), \quad t \ge 0, \quad x(0)   \neq 0
\end{equation}
where $\lmd$ and $\mu$ are both real constants. 
The CPS Lyapunov inequality (\ref{numerical-Lyapunov-Ito-LMI}) immediately gives 
\begin{equation}   \label{Lyapunov-Ito-LMI-scalar}
   2 \lmd + \mu^2 + \lmd^2 \,  \overline{\Delta t} <0 \quad \Leftrightarrow \quad \overline{\Delta t} <  \frac{ - (2 \lmd + \mu^2 )} {\lmd^2}.
\end{equation}
According to Theorem \ref{linearSDE-equivalence},  this is the necessary and sufficent condition for mean-square exponential stability of    the linear scalar physical system (\ref{scalarSDE-linear}), its cyber system of the Euler-Maruyama method, and its CPS of the Euler-Maruyama method, that is, (\ref{linearSiDE-xy}) with   $n=m=1$, $F= \lambda$ and $G_1 = \mu$.
It is observed that  (\ref{Lyapunov-Ito-LMI-scalar}) is exactly the inequality (4.3) in \cite{higham2000} with $\theta =0$ for the Euler-Maruyama method.
Obviously,  (\ref{Lyapunov-Ito-LMI-scalar})   is the very   scalar case of our CPS Lyapunov inequality  (\ref{numerical-Lyapunov-Ito-LMI})
that applies to general multi-dimensional linear systems. 
\end{remark}

Recently,   based on the reformulation of some well-known results, 
\cite{buckwar2012a} developed an approach to mean-square  stability analysis of numerical methods (including  the widely-used  Euler-Maruyama scheme) for multi-dimensional linear SDEs  (viz. system  (\ref{linearSDE}) with $n \ge 2$), which was   applied in \cite{buckwar2012b}  to study the mean-square numerical stability for a linear SDE of non-normal drift and skew-symmetic diffusion structures \cite{higham2006}.  Specifically, on one hand, some well-known result (\cite[Theorem 8.5.5, p142]{arnold1974}, \cite[Remark 6.4, p183]{khas2012}) expressed in the vectorization of matrices and Kronecker product gives \cite[Lemma 3.3]{buckwar2012a}
\begin{equation}    \label{MS-SDE}
  Re_M ({\mathcal S}) < 0
\end{equation}
if and only if linear SDE  (\ref{linearSDE}) is mean-square exponentially stable, where $Re_M ({\mathcal S})$ is the real part of the eigenvalue $\lambda_M ({\mathcal S})$ of $n^2 \times n^2$ matrix 
$$
 {\mathcal S} := I_n \otimes F + F \otimes I_n + \sum_{j=1}^m G_j \otimes G_j .
$$
On the other hand, a stability result for discrete-time stochastic systems (see, e.g., \cite[p197]{khas2012}) is applied to study mean-square stability of some numerical schemes for the SDE  (\ref{linearSDE}). For example, by  \cite[Lemma 3.4, Theorem 3.7]{buckwar2012a},  the Euler-Maruyama method (\ref{linearSDE-EMMethod}) is mean-square exponentially stable if and only if
\begin{equation}    \label{MS-SDE-EM}
    \rho ({\mathcal S}_0 (\Delta t) ) <1,
\end{equation}
where $ \rho ({\mathcal S}_0 (\Delta t ) )$ is the spectral radius of  $n^2 \times n^2$ matrix
$$
   {\mathcal S}_0 ( \Delta t) := \big( \bA (\Delta t)  \otimes \bA (\Delta t) \big)  + \sum_{j=1}^m \big( \bB_j (\Delta t) \otimes \bB_j  (\Delta t)   \big) 
$$ 
with $  \bA (\Delta t) = I_n + {\Delta t} F $ and $  \bB_j ( \Delta t) = \sqrt{ \Delta t } \, G_j$ for $j =1, \cdots, m$.

In \cite{buckwar2012a},   ${\mathcal S}$ is called the mean-square stability matrix  of the SDE (\ref{linearSDE}) and $ {\mathcal S}_0 ( \Delta t)$  that of the Euler-Maruyama method (\ref{linearSDE-EMMethod}).
Notice that  $ {\mathcal S}_0 ({\Delta t}) $ is a function of stepsize ${\Delta t}$ while, obviously,   ${\mathcal S}$ is not. 
The results in  \cite{buckwar2012a,buckwar2012b} provided the  explicit structure of stability matrices ${\mathcal S}$ and $ {\mathcal S}_0 ({\Delta t}) $,  and showed the comparative stability regions \cite[Fig.2]{buckwar2012b} for the   SDE and the   numerical method with a few numerical examples of non-normal SDEs  \cite{higham2006}.
However, the relationship between the stability conditions (\ref{MS-SDE}) and  (\ref{MS-SDE-EM})  (for the  SDE and the   Euler-Maruyama method, respectively)  has not been figured out. 
Here we prove their equivalence and reformulate the stability conditions (\ref{MS-SDE}) and (\ref{MS-SDE-EM}) in the form of LMIs, which   is relegated to Appendix A. So it is easy to handle the problems with some computing techniques  and toolboxes \cite{boyed1994,elghaoui1995,gahinet1995}.   

Note that  it is easy to obtain the upper bound $\overline{\Delta t}$ of stepsizes for the test problem (Q1) by solving the $n \times n$-dimensional LMI  (\ref{numerical-Lyapunov-Ito-LMI})  of our proposed method, which is a linear function with respect to  $\Delta t$. Clearly, LMI  (\ref{numerical-Lyapunov-Ito-LMI}) holds for all $\Delta t \in (0, \overline{\Delta t}]$ provided it is satisfied for  some $ \overline{\Delta t}>0$.
But, to calculate the upper bound $ \tDelt$ by the approach  of mean-square stability matrix \cite{buckwar2012a}, one has to deal with        the spectral radius (\ref{MS-SDE-EM})  of    $n^2 \times n^2$ matrix $ {\mathcal S}_0 ({\Delta t}) $ that involves a polynomial of the stepsize ${\Delta t}$ whose order is some exponential function of $n$. Alternatively, one can  solve  the following  LMI  with respect to positive definite matrix $\bP \in \RR^{n^2 \times n^2}$, which we show is equivalent to inequality  (\ref{MS-SDE-EM})   in  Appendix A,
\begin{multline}   \label{LMI-2012}
  {\mathcal S}^T \bP + \bP   {\mathcal S} +   {\Delta t}     \big(  {\mathcal S}^T \bP  {\mathcal S} + \bF^T \bP + \bP \bF \big)     + ( {\Delta t} )^2 \big(  {\mathcal S}^T \bP  \bF + \bF^T \bP {\mathcal S}  \big)  + ( {\Delta t} )^3  \bF^T \bP \bF  \\
=  \big(  {\mathcal S}   +  {\Delta t}  \bF    \big)^T  \bP + \bP \big(  {\mathcal S}     +  {\Delta t}   \bF \big)      +   {\Delta t}    \, \big(  {\mathcal S}   +  {\Delta t}  \bF    \big)^T  \bP \big(  {\mathcal S}     +  {\Delta t} \bF \big) 
< 0, 
\end{multline}  
where $ \bF= F \otimes F $. This involves a cubic function of the prescribed parameter $\Delta t >0$ for all $n$. Note that, for a multi-dimensional SDE ($n \ge 2$),  the spectral radius (\ref{MS-SDE-EM}) of  $n^2 \times n^2$ matrix $ {\mathcal S}_0 ({\Delta t}) $  involves a polynomial of $\Delta t$ of up to (very) high order. For example, in the case   $n=2$ of non-normal SDE 
 \cite[Eq.(9)]{buckwar2012b}  (see also \cite{higham2006}),  the characteristic equation of mean-square stability matrix $ {\mathcal S}_0 ( \Delta t)$ of $4 \times 4$ dimensions for the Euler-Maruyama scheme (\cite[Eq.(15)]{buckwar2012b} with $\theta =0$)  involves a polynomial of $\Delta t$ of up to order $8$. The conditions of this approach  are quite cumbersome \cite{khas2012}. It is easy to tackle the equivalent LMI (\ref{LMI-2012}), which is a cubic function of the prescribed $\Delta t >0$ for all $n$,  with some toolbox  \cite{boyed1994,gahinet1995}. 
However, one should be aware  that, unlike the linear inequality (\ref{numerical-Lyapunov-Ito-LMI}), that  LMI (\ref{LMI-2012}), which is not only of  $n^2 \times n^2$   dimensions but also involves a cubic function of the prescribed  $\Delta t >0$,  is satisfied for some $\oDt >0$ does not necessarily mean that it holds for all $\Delta t \in (0,   \oDt ]$ and, thereby, the results such as the upper bound $ \oDt$ could be  restrictive due to  the highly nonlinearity of $\Delta t$ involved in the computation, see Appendix A.

We  can further show that our proposed  method  (\ref{numerical-Lyapunov-Ito-LMI}) gives better bound $\oDt$  of  stepsizes than  (\ref{MS-SDE-EM}) from  \cite{buckwar2012a} or its LMI equivalent  (\ref{LMI-2012}). This is: if (\ref{MS-SDE-EM}) and  (\ref{LMI-2012})   hold for all  $ \Delta t \in (0, \tDelt ] $ for some $\tDelt>0$, then the CPS Lyapunov inequality  (\ref{numerical-Lyapunov-Ito-LMI}) holds for some $ \oDt \ge \tDelt $, namely,  either $ \oDt = \tDelt $ or  $ \oDt > \tDelt $.  In short, we shall show either $  \hDelt = \tDelt $ or  $  \hDelt > \tDelt $, where 
\begin{equation}   \label{comparisonLMIs}
\tDelt := \sup \{ \oDt  >0:  {\rm  (\ref{LMI-2012})  \; holds \; for \; all \;} \Delta t \in (0, \oDt \, ] \} \; {\rm and} \;  \hDelt := \sup \{ \oDt >0:  {\rm  (\ref{numerical-Lyapunov-Ito-LMI}) \; holds }   \}. 
\end{equation}
It is observed that,  due to the continuity of (\ref{LMI-2012}) with respect to $\Delta t$, the strict inequality (\ref{LMI-2012}) does not hold at   $\Delta t = \tDelt$ and, similarly,   the strict inequality (\ref{numerical-Lyapunov-Ito-LMI}) does not hold at $\oDt = \hDelt$. 
To show either $  \hDelt = \tDelt $ or  $  \hDelt > \tDelt $,    let us   consider a linear SDE with $ \Delta t \in (0, \oDt \, ]$ for some $\oDt >0 $
\begin{equation}    \label{linearSDE-extra}
   \d x(t) = F x(t) \d t+ \sum_{j=1}^m G_j x(t) \d B_j (t)   + \sqrt{ \Delta t  }\, F x(t) \d B_{m+1} (t)   \quad \forall \, t \ge 0
\end{equation}
  where     $B_{m+1} (t)$ is a scalar Brownian motion. Notice that (\ref{linearSDE-extra}) is exctly   (\ref{linearSDE}) if $\Delta t =0$ and, according to  \cite[Lemma 3.3]{buckwar2012a}, the SDE (\ref{linearSDE-extra}) is  mean-square exponentially stable if and only if (\ref{MS-SDE-t}) holds.
 In fact, by the well-known results  \cite{arnold1974,boyed1994,buckwar2012a,gahinet1995,khas2012}, the following are equivalent: 
 \begin{itemize}

 \item[(a)] The CPS Lyapunov  LMI   (\ref{numerical-Lyapunov-Ito-LMI}) holds. 

 \item[(b)]  There is a positive definite matrix $\bP   \in \RR^{n^2 \times n^2}$ such that
\begin{equation}      \label{EM-bP-4}
  \big(  {\mathcal S}   +  {\Delta t}  \bF    \big)^T  \bP + \bP \big(  {\mathcal S}     +  {\Delta t}   \bF \big)  = {\mathcal S}^T \bP + \bP   {\mathcal S}  
     +  {\Delta t}     \big(   \bF^T \bP + \bP \bF \big)  <  0  
\end{equation}
 for each $ {\Delta t} \in (0, \oDt \, ]$.

 \item[(c)] The following inequality holds for each ${\Delta t} \in (0, \oDt \, ]$
\begin{equation}    \label{MS-SDE-t}
Re_M ({\mathcal S}  +  {\Delta t}  \bF ) < 0.
\end{equation}

  \item[(d)] The SDE (\ref{linearSDE-extra})    is  mean-square exponentially stable. 
\end{itemize}
That is, (a) $ \Leftrightarrow$ (b) $ \Leftrightarrow$ (c) $ \Leftrightarrow$ (d) and they all give the same  supremum $ \hDelt$ defined as (\ref{comparisonLMIs}). 
 Observe that (\ref{LMI-2012})  implies   (\ref{EM-bP-4}) but not vice versa. This means that (\ref{numerical-Lyapunov-Ito-LMI}),   (\ref{EM-bP-4}) and  (\ref{MS-SDE-t}) hold for all $ \oDt \in (0, \tDelt )$, namely, $\hDelt \ge \tDelt$. Notice that (\ref{LMI-2012}), (\ref{EM-bP-4}) and  (\ref{MS-SDE-t}) are all continuous  with respect to $\Delta t$. 
Due to the continuity of  (\ref{LMI-2012}) at $\Delta  t= \tDelt $,  
\begin{multline*}
 \big(  {\mathcal S}   +  \tDelt   \bF    \big)^T  \bP + \bP \big(  {\mathcal S}     +  \tDelt    \bF \big)  +  \tDelt     \, \big(  {\mathcal S}   +  \tDelt  \bF    \big)^T  \bP \big(  {\mathcal S}     +  \tDelt   \bF \big)  \le 0    \\
\Leftrightarrow  \quad   \big(  {\mathcal S}   +  \tDelt   \bF    \big)^T  \bP + \bP \big(  {\mathcal S}     +  \tDelt    \bF \big)  \le -  \tDelt     \, \big(  {\mathcal S}   +  \tDelt  \bF    \big)^T  \bP \big(  {\mathcal S}     +  \tDelt   \bF \big) .
\end{multline*} 
Unless matrix $ {\mathcal S}     +  \tDelt   \bF $ is singular,  the LMIs (\ref{EM-bP-4}) and, equivalently, (\ref{numerical-Lyapunov-Ito-LMI}) hold at   $\oDt = \tDelt$ and, due to their continuity   at $\Delta t = \tDelt $, the LMIs (\ref{EM-bP-4}) and  (\ref{numerical-Lyapunov-Ito-LMI}) hold for some $\oDt >  \tDelt $, which gives  $\hDelt > \tDelt$. 
So we have  $\hDelt = \tDelt$ if matrix $ {\mathcal S}     +  \tDelt   \bF $ is singular; otherwise, $\hDelt > \tDelt$.  The latter could often be   the case. This clearly shows that our proposed CPS Lyapunov LMI  (\ref{numerical-Lyapunov-Ito-LMI}) gives better bound $\oDt$  of stepsizes than  (\ref{MS-SDE-EM}) from  \cite{buckwar2012a}, or, its  LMI equivalent (\ref{LMI-2012}).

\begin{remark}   \label{remark-upperbound}
The upper bound $\oDt$  of stepsizes in the CPS Lyapunov inequality (\ref{numerical-Lyapunov-Ito-LMI}) is obtained  by the same Lyapunov function $V(x) = \tV (x) = x^T P x$ for both the   subsystems,  which is special application of  Theorem \ref{Theorem_StableContinuous_app}.     The upper bound $\oDt$  could be improved by using some other Lypunov functions or functionals  such as different Lyapunov functions for the subsystems \cite{huang_partII} and some  Lyapunov function or funcational of quadratic form with block matrices for the compact form of the CPS \cite{fridman2014,huang2022,liu1988,nagh2008} to exploit the structure of the composition (\ref{linearSiDE-xy})  of the subsystems \cite{liu1988}. 
\end{remark}

\section{On systems numerics}   \label{sec:conclusion}

In this paper,  we formulated a new and general class (\ref{SiDE-xy}) of SiDEs that can be used to represent a seamless integration of the physical system (the SDE) and its cyber counterpart  (the numerical method), which is  a novel framework for   numerical study of dynamical systems.  Our proposed CPS of the widely-used Euler-Maruyama method for SDEs  not only provides a holistic view of the physical system (the SDE) and its cyber counterpart  (the numerical method) but also reveals their intrinsic relationship: they are not two separate systems but the subsystems of the CPS.  By our CPS approach, we proved positive results to the key questions (Q1) and (Q2) using the Lyapunov stability theory we established for our general class of SiDEs. These fundamental results and their application  comprise a foundational theory of the CPSs of numerical methods for SDEs. 
   
In the classical numerical analysis of initial-value problems, the convergence and the stability of a numerical method are two main concerns. The proposed CPS  also provides a novel approach to convergence
analysis of the numerical method for SDEs.
Using the dyanimcs of the discretization error $y(t)$,  we show the classical finite-time convergence result 
$$
  \EE \,\Big[ \sup_{0\le t \le T}  | y(t) |^2 \Big]  = O ({\Delta t})
$$
 for the Euler-Maruyama method by our CPS approach, which is different from those   in the literature \cite{higham2002,kloeden1992,mao2007book,stuart1996}. The novel proof is relegated to Appendix B.

Our proposed CPS and theory   have initiated the study of systems numerics, where there are many open and interesting problems. 
For example, it is among the future work to extend our   theory to many other (explicit or implicit) numerical methods \cite{higham2000,higham2002,hutzenthaler2012,kloeden1992,mao2015b,sabanis2013,stuart1996} as well as to  dyanmical systems such as   SDEs  with time delay and/or switching \cite{huang2009,huang2022}.

\section*{Acknowledgement}
The author is supported  by  National Natural Science Foundation of China (No.61877012).   
The author gratefully acknowledges Prof. X. Mao  (the author's PhD supervisor at University of Strathclyde, UK) for his very helpful comments and suggestions, which   improve the quality of this work. 
Some 
of this work was done during the author's visit to The Centre for Stochastic \& Scientific Computations at Harbin Institute of Technology and  the author would like to thank Prof. M. Song, Prof. M. Liu and their group for the helpful discusssions. The author is also thankful to Prof. K. C. Sou for  the communications  in the past a few years.  
 
\section*{Appendix A.  The equivalence of the stability conditions (\ref{MS-SDE}) and (\ref{MS-SDE-EM})}    \label{sec:App-A}
\begin{proof} It is observed that
 $ {\mathcal S}_0 ( 0)= I_{n^2} $,  $ \dot{ {\mathcal S}}_0 ( 0)=  {\mathcal S} $, 
 $\ddot{ {\mathcal S}}_0 ( \Delta t) = 2 \, ( F \otimes F )   =: 2 \bF $,
where $\dot{{\mathcal S}}_0  ( \Delta t)$ and $\ddot{{\mathcal S}}_0  ( \Delta t)$ are the first and the second derivatives of ${\mathcal S}_0$ with respect to $\Delta t$, respectively.  For   $\Delta t >0$, Taylor expansion produces
\begin{equation}  \label{S0-TaylorSeries}
          {\mathcal S}_0 ( \Delta t) = I_{n^2} + \Delta t \, {\mathcal S} + (\Delta t)^2  \bF .
\end{equation} 

   (\ref{MS-SDE}) $\Rightarrow$   (\ref{MS-SDE-EM}).   Stability condition (\ref{MS-SDE}) for the linear SDE equivalently means that there is a positive definite matrix $\bP \in \RR^{n^2 \times n^2}$ such that   \cite{boyed1994,khas2012} 
\begin{equation}  \label{SDE-bP}
        {\mathcal S}^T \bP + \bP   {\mathcal S} < 0.
\end{equation}
The   Taylor series (\ref{S0-TaylorSeries}) gives
\begin{multline}  \label{EM-bP-0}
    {\mathcal S}_0^T ( \Delta t) \bP  {\mathcal S}_0 ( \Delta t)     
   = \bP + {\Delta t}    \big[  {\mathcal S}^T \bP + \bP   {\mathcal S}  
     +  {\Delta t}     \big(  {\mathcal S}^T \bP  {\mathcal S} + \bF^T \bP + \bP \bF \big)    \\   {} + ( {\Delta t} )^2 \big(  {\mathcal S}^T \bP  \bF + \bF^T \bP {\mathcal S}  \big)  + ( {\Delta t} )^3  \bF^T \bP  \bF \big].
\end{multline}
Owing to   (\ref{SDE-bP}),  there is a pair of (sufficiently small) positive numbers $\oDt $ and $\ba = \ba (\oDt )$ such that
\begin{equation*}   
    {\mathcal S}_0^T ( \Delta t) \bP  {\mathcal S}_0 ( \Delta t) \le \bP + \ba \, {\Delta t} \big(  {\mathcal S}^T \bP + \bP   {\mathcal S}  \big) < \bP    
\quad  \forall  \,   {\Delta t} \in (0,  \oDt  \,] .
\end{equation*}
 This implies that stability condition  (\ref{MS-SDE-EM}) 
 is satisfied 
 for all  $ {\Delta t} \in (0,  \oDt \,]$.

  (\ref{MS-SDE-EM}) $\Rightarrow$    (\ref{MS-SDE}). 
 Notice that  (\ref{EM-bP-0}) can be rewritten as
\begin{multline} 
 {\mathcal S}_0^T ( \Delta t) \bP  {\mathcal S}_0 ( \Delta t)     
   = \bP + {\Delta t}    \big[  {\mathcal S}^T \bP + \bP   {\mathcal S}  
     +  {\Delta t}     \big(  {\mathcal S}^T \bP  {\mathcal S} + \bF^T \bP + \bP \bF \big)    \\   {} + ( {\Delta t} )^2 \big(  {\mathcal S}^T \bP  \bF + \bF^T \bP {\mathcal S}  \big)  + ( {\Delta t} )^3  \bF^T \bP  \bF \big] \\
  = \bP + {\Delta t}   \Big[  \big(  {\mathcal S}   +  {\Delta t}  \bF    \big)^T  \bP  + \bP \big(  {\mathcal S} +  {\Delta t}   \bF \big)          +   {\Delta t}    \, \big(  {\mathcal S}   +  {\Delta t}  \bF    \big)^T  \bP \big(  {\mathcal S}     +  {\Delta t} \bF \big)  \Big] .   \label{EM-bP-2}
\end{multline}
%
Suppose that condition  (\ref{MS-SDE-EM}) holds for all $ {\Delta t} \in (0, \oDt ]$, where  $\oDt$ is some  positive number. Then, equivalently,  there is a positive definite matrix $ \bP = \bP({\Delta t}) \in \RR^{n^2 \times n^2}$  such that    \cite{boyed1994,khas2012} 
\begin{equation}  \label{EM-bP-3}
    {\mathcal S}_0^T ( {\Delta t}  ) \bP  {\mathcal S}_0 ({\Delta t}  )  <  \bP  \quad \forall \,  {\Delta t} \in (0, \oDt \, ].
\end{equation}
Substitution of (\ref{EM-bP-2}) into (\ref{EM-bP-3}) produces  (\ref{LMI-2012}) for all ${\Delta t} \in (0, \oDt \, ]$, 
or, by Schur complement, 
\begin{equation}     \label{EM-bP-3-2}
    \begin{bmatrix} 
    \big(  {\mathcal S}   +  {\Delta t}  \bF    \big)^T  \bP + \bP \big(  {\mathcal S}     +  {\Delta t}   \bF \big)  
&    \sqrt{\Delta t}    \, \big(  {\mathcal S}   +  {\Delta t}  \bF    \big)^T  \bP   \\
       \sqrt{\Delta t}    \, \big(  {\mathcal S}   +  {\Delta t}  \bF    \big)^T   \bP   &   -  \bP    \end{bmatrix}      < 0   \quad \forall \, {\Delta t} \in (0, \oDt \, ].
\end{equation}
So  
$  (\ref{MS-SDE-EM})  \, \Leftrightarrow \,  (\ref{LMI-2012}) \, \Leftrightarrow \,  (\ref{EM-bP-3})  \,  \Leftrightarrow \,     (\ref{EM-bP-3-2}) $.
But   (\ref{LMI-2012}) implies the LMI (\ref{EM-bP-4}).
 This equivalently means that matrix ${\mathcal S}   +  {\Delta t}  \bF$ is Hurwitz, namely, inequality (\ref{MS-SDE-t}) holds for each $   {\Delta t} \in (0, \oDt \, ]$. 

Recall that $\bF =  F \otimes F $.  
Letting ${\Delta t} \to 0$ in  (\ref{EM-bP-4}) and thus   (\ref{MS-SDE-t})  gives stability condition (\ref{MS-SDE}) for the SDE (\ref{linearSDE}). The proof is complete.
\end{proof}

We remark that,  by  approach of mean-square stability matrices \cite{buckwar2012a}, the upper bound $\oDt$ can be  calculated by solving either   the spectral radius problem (\ref{MS-SDE-EM}) or   
 the LMI  equivalent (\ref{LMI-2012}). 
%
The former  involves a  polynomial of the stepsize $\Delta t$ whose order is some exponential function of $n$  while the latter remains as a cubic function of  $\Delta t$ for all $n$.   
The highly nonlinearity would introduce not only computational complexity  but also conservativeness to the results. We have reformulted the highly nonlinear problem (\ref{MS-SDE-EM}) into the  LMI (\ref{LMI-2012}). This has significantly simplified the approach of mean-square stability matrices ${\mathcal S}$ and $ {\mathcal S}_0 ( \Delta t)$.  But it is  worth noting that, for a linear $n$-dimensional SDE, our proposed  numerical Lyapunov  LMI  (\ref{numerical-Lyapunov-Ito-LMI}) of $n \times n$ dimensions is always a linear inequality of the stepsize $ \oDt$ while the LMI problem (\ref{LMI-2012})  involves not only a cubic function of $ \Delta t$ but also matrices of $n^2 \times n^2 $ dimensions.

\section*{Appendix B.  A novel proof of the convergence of  the Euler-Maruyama method}   \label{sec:App-B}
\begin{proof} 
For the   convergence problem of the numerical method, the implimentation  $ \sqrt{\Delta t} \, \xi (k+1) $  should be replaced by the increment $\Delta B_k = B( (k+1) \Delta t) -B(k \Delta t)$ itself in  SiDE (\ref{SiDE-xy_app}), that   is, 
\begin{subequations}  \label{SiDE-xy_conv}
\begin{align}
& \mathrm{d}  x(t) = f (x(t) )\mathrm{d}t + g(x(t)  )\mathrm{d} B(t)   \label{SDE_x_conv} \\
& \d y(t)  = f(x(t) )\mathrm{d}t + g(x(t)  )\mathrm{d} B(t)
\quad t \in [t_k, t_{k+1})  \label{SDE_y_conv}      \\    
&\tilde{\Delta} (x(t_{k+1}^-), y(t_{k+1}^-), k+1) := y(t_{k+1}) - y(t_{k+1}^-)  \nonumber  \\
 &  \qquad    = - f \big( x(t_{k+1}^-)-y(t_{k+1}^-) \big) \Delta t    - g \big( x(t_{k+1}^-)- y(t_{k+1}^-)  \big) \Delta B_k  \quad  k \in \NN
\label{Impulse_y_conv} 
\end{align}
\end{subequations}
with $x(0) \in \RR^n$ and $y(0)=0$, where $t_k =  k \Delta t$ for all $k \in \NN$. According to the existing results (\cite{mao2007book,kloeden1992} as  well as Lemma \ref{existence_n_uniqueness}), SiDE (\ref{SiDE-xy_conv}) has a unique (right-continuous) solution $z(t)=[ x^T (t)  \;  y^T(t) ]^T$, which belongs to  $\mm^2 ([0, T]; \RR^{n+q})$ for all $T \ge 0$. 
In particular,  \cite[Lemma 3.2, p51]{mao2007book} gives
\begin{equation}    \label{conv-Esupx}
  \EE \big[  \sup_{0 \le t \le T} | x (t) |^2  \big] \le (1 + 3|x(0)|^2) e^{ 3L T (T+4)} =: C_T   . 
\end{equation}
On the interval $[ t_k, t_{k+1}]$ for every $k \in \NN$, 
\begin{multline*}    
      y(t_{k+1}) - y(t_k) =    \int_{t_k}^{t_{k+1}^-} f(x(t)) \mathrm{d}t    + \int_{t_k}^{t_{k+1}^-} g(x(t)) \mathrm{d} B(t) \\
{}   - f (x(t_{k+1}^-)-y(t_{k+1}^-)) \Delta t  - g (x(t_{k+1}^-)- y(t_{k+1}^-)) \Delta B_k       \\
  =  \int_{t_k}^{t_{k+1}^-} \big [ f(x(t))  - f(x (t_k) - y(t_k)) \big] \mathrm{d}t  + \int_{t_k}^{t_{k+1}^-} \big [ g(x(t))  - g(x (t_k) - y(t_k)) \big] \mathrm{d} B(t) 
\end{multline*}
and, due to $y(0)=0$, 
\begin{multline}     \label{conv-y-tk}
     y(t_{k+1}) =  \sum_{j=0}^k \int_{t_j}^{t_{j+1}^-}  \big [ f(x(t))  - f(x (t_j) - y(t_j)) \big] \mathrm{d}t     + \sum_{j=0}^k \int_{t_j}^{t_{j+1}^-} \big [ g(x(t))  - g(x (t_j) - y(t_j)) \big] \mathrm{d} B(t)       \\ 
      =  \int_{0}^{t_{k+1}}  \big [ f(x(t))  - f(x ( [t] ) - y( [t] )) \big] \mathrm{d}t    +  \int_{0}^{t_{k+1}}  \big [ g(x(t))  - g(x ( [t] ) - y( [t] )) \big] \mathrm{d} B(t) ,   \quad
\end{multline}
where $[t] := \sup \{ t_j : t_j \le t, j \in \NN \}$ for all $t \ge 0$. By Cauchy-Schwaz inequality, this produces
\begin{multline*}    
   | y(t_{k+1}) |^2 =  \left|  \int_{0}^{t_{k+1}}  \big [ f(x(t))  - f(x ( [t] ) - y( [t] )) \big] \mathrm{d}t  +  \int_{0}^{t_{k+1}}  \big [ g(x(t))  - g(x ( [t] ) - y( [t] )) \big] \mathrm{d} B(t) \right|^2      \\
   \le 2 \left[ t_{k+1} \int_{0}^{t_{k+1}}  \big | f(x(t))  - f(x ( [t] ) - y( [t] )) \big|^2 \mathrm{d}t    + \left| \int_{0}^{t_{k+1}}  \big [ g(x(t))  - g(x ( [t] ) - y( [t] )) \big] \mathrm{d} B(t) \right|^2   \right] .   
\end{multline*}
By the It{\^o}  isometry and the global Lipschitz condition (\ref{globalLipschitz_app}),  
\begin{multline*}    
  \EE \, | y(t_{k+1}) |^2     \le 2   t_{k+1} \, \EE  \int_{0}^{t_{k+1}}  \big | f(x(t))  - f(x ( [t] ) - y( [t] )) \big|^2 \mathrm{d}t    \\
{} +   2   t_{k+1}   \,   \EE \int_{0}^{t_{k+1}}  \big | g(x(t))  - g(x ( [t] ) - y( [t] )) \big|^2 \mathrm{d} t \\
    \le 2  L^2  ( t_{k+1}  +1) \; \EE \int_{0}^{t_{k+1}}  \big | x(t)   -  x ( [t] ) + y( [t] ) \big|^2 \mathrm{d}t .   
\end{multline*}
Since (\ref{SDE_x_conv}) and (\ref{SDE_y_conv}) give $x(t)   -  x ( [t] ) = y(t) - y([t])$
for all $t \ge 0$, 
this implies
\begin{equation}     \label{conv-Ey2-tk-2}
   \EE \, | y(t_{k+1}) |^2  \le  2  L^2  ( t_{k+1}  +1) \; \EE\,  \int_{0}^{t_{k+1}}   \,  | y(t)   |^2 \mathrm{d}t.
\end{equation}
For any $T \ge 0$, using  the It{\^o} lemma, (\ref{globalLipschitz_app}),  (\ref{LV_1-y}) and (\ref{conv-Ey2-tk-2}), one obtains
\begin{multline*}   
  \EE \, | y(T) |^2    = \EE \, |y([T]) |^2   + \EE \, \int_{[T]}^T \big[ 2 y^T(s) f(x(s)) + | g(x(s) |^2 \big]  \mathrm{d} s     \\
   \le 2  L^2  ( [T]  +1) \; \EE  \, \int_{0}^{[T]}   | y(s)   |^2 \mathrm{d}s   + \EE \, \int_{[T]}^T  | y(s) |^2  \mathrm{d} s    +  \EE \, \int_{[T]}^T \big[ | f(x(s)) |^2+ | g(x(s) |^2 \big]  \mathrm{d} s    \\
    \le  K_T \; \EE  \, \int_{0}^{[T]}   | y(s)   |^2 \mathrm{d}s  + 2 L^2 \; \EE \, \int_{[T]}^T | x(s)|^2  \mathrm{d} s ,
\end{multline*}
where constant $K_T =  2  L^2  ( [T]  +1) \vee 1$. This implies 
\begin{multline*}  
   \EE  \Big[ \sup_{0\le t \le T}  | y(t) |^2 \Big]     \le 2 L^2    \int_{0}^{\Delta t} \EE    \Big[  \sup_{0 \le t_j \le [T]} | x(t_j +s) |^2 \Big]  \mathrm{d} s     +  K_T   \int_{0}^{T} \EE    \Big[  \sup_{0 \le s \le t} |y(s)|^2  \Big]    \mathrm{d} t       \\
  \le 2C_T L^2 {\Delta t} + K_T     \int_{0}^{T} \EE  \, \Big[  \sup_{0 \le s \le t} |y(s)|^2  \Big] \; \mathrm{d} t, 
\end{multline*}
where $C_T$ is  given by (\ref{conv-Esupx}) above.
In view of the Gronwall inequality  (\cite[Lemma 4.5.1, p129]{kloeden1992}, \cite[Theorem 8.1, p45]{mao2007book}), this yields
\begin{equation*} 
  \EE \,\left[ \sup_{0\le t \le T}  | y(t) |^2 \right]  \le   2C_T L^2  e^{K_T T} {\Delta t},
\end{equation*}
which completes the proof.
\end{proof}

%


\begin{thebibliography}{100}

\bibitem{arnold1974}
L.~Arnold,  \emph{Stochastic differential equations: theory and applications}, 
 New York, US: John Wiley \& Sons, 1974.

\bibitem{astrom1970}
K.~J. {\r A}str{\"o}m,  \emph{Introduction to stochastic control theory},  New York, US: Academic Press, 1970.


\bibitem{astrom1997}
K.~J. {\r A}str{\"o}m and B.~Wittenmark, \emph{Computer-controlled systems: theory and design (3rd Ed.)},  New Jersey, US: Prentice Hall, 1997.

\bibitem{astrom2014}
K.~J. {\r A}str{\"o}m and P.~R. Kumar,  
``Control: a perspective,"
  \emph{Automatica}, vol. 50,  pp. 3-43, 2014.



%
\bibitem{boyed1994}
S.~Boyed, L.~El Ghaoui, E.~Feron and V.~Balakrishnan,   
  \emph{Linear matrix inequalities in systems and control theory},
 Pennsylvania, US: Society for Industrial and Applied Mathematics, 1994. 
%
\bibitem{buckwar2012a}
E.~Buckwar and C.~Kelly, 
``A structural analysis of asymptotic mean-square stability
for multi-dimensional linear stochastic differential systems,"
  \emph{Comput.   Math.   Appl.}, vol. 64, pp. 2282-2293, 2012.

\bibitem{buckwar2012b}
E.~Buckwar and T.~Sickenberger,  
``Non-normal drift structures and linear stability analysis of numerical
methods for systems of stochastic differential equations,"
 \emph{Appl. Numer. Math.},  vol. 62, pp. 842-859, 2012.


\bibitem{caraballo2017}
T.~Caraballo, M.~A. Hammami and L.~Mchiri, 
``Practical exponential stability of impulsive stochastic functional differential equations,"
  \emph{Syst.   Contr. Lett.}, vol. 109,  pp. 43-48, 2017.

\bibitem{control2002}
Control Panel, 
``Control in an information rich world: report of the panel on future directions in
control, dynamics, and systems,"
 Available:  {\verb#http://www.cds.caltech.edu/~murray/#}
 {\verb#cdspanel/report/latest.pdf#}, accessed on April 11, 2022.


\bibitem{cps2008}
 CPS Summit,
``Report: Cyber-physical systems summit,"
 Available:   {\verb#http://iccps2012.cse.#}  {\verb#wustl.edu/_doc/CPS_Summit_Report.pdf#}, accessed on April 11, 2022.


%
\bibitem{elghaoui1995}
L.~El Ghaouii,  
``State-feedback control of systems with multiplicative noise via
linear matrix inequalities," \emph{Systems \& Control Letters}, vol. 24, pp. 223--228, 1995.


\bibitem{fridman2014}
 E.~Fridman, ``Tutorial on Lyapunov-based methods for time-delay systems,"
  \emph{Europ. J. Contr.}, vol.20,  pp.271-283, 2014.

%
\bibitem{gahinet1995}
P.~Gahinet, A.~Nemirovski, A.~J. Laub and M.~Chilali, \emph{LMI control toolbox}, 
  Massachusetts, USA: The MathWorks Inc, 1995.


\bibitem{gikhman1996}
I.~I. Gikhman and A.~V. Skorokhod,  \emph{Introduction to the theory of random processes},
 Philadelphia, USA: Dover Publications, 1996.

\bibitem{heirung2018}
T.~A. N. Heirung, J.~A. Paulson, J.~O'Leary, A.~Mesbah,
``Stochastic model predictive control - how does it work?"
  \emph{Comput.   Chem. Engin.}, vol. 114, pp. 158-170, 2018.


\bibitem{hespanha2008}
J.~P. Hespanha, D.~Liberzon and  A.~R. Teel, 
``Lyapunov conditions for input-to-state stability of impulsive systems,"
 \emph{Automatica}, vol. 44,  pp. 2735-2744, 2008.


\bibitem{higham2000}
D.~J. Higham, ``Mean-square and asymptotic stability of the stochastic theta method,"
  \emph{SIAM J. Numer. Anal.}, vol. 38,  pp. 753-769, 2000.




\bibitem{higham2001}
D.~J. Higham, ``An algorithmic introduction to numerical simulation of stochastic differential equations,"
  \emph{SIAM Review}, vol. 43,  pp. 525-546, 2001.


\bibitem{higham2002}
D.~J. Higham, X.~Mao and A.~M.Stuart, ``Strong convergence of Eular-type methods for nonlinear stochastic differential equations,"
  \emph{SIAM J. Numer. Anal.},  vol. 40,  pp. 1041-1063, 2002.


\bibitem{higham2003}
D.~J. Higham, X.~Mao and A.~M.  Stuart, ``Exponential mean-square stability of numerical solutions to  stochastic differential equations,"
  \emph{LMS J. Comput. Math.}, vol. 6,  pp. 297-313, 2003.


\bibitem{higham2006}
D.~J. Higham and X.~Mao, ``Nonnormality and stochastic differential equations,"
  \emph{BIT Numer. Anal.}, vol. 46,  pp. 525-532, 2006.



\bibitem{higham2008}
D.~J. Higham, ``Modeling and simulating chemical reactions,"
 \emph{SIAM Review}, vol. 50,  pp. 347-368, 2008.




\bibitem{huang2009}
L.~Huang and X.~Mao, ``On input-to-state stability of stochastic retarded systems with Markovian switching,"
  \emph{IEEE Trans.  Automat. Contr.}, vol. 54, pp. 1898-1902, 2009.


\bibitem{huang2010PhD}
L.~Huang, ``Stability and stabilisation of stochastic delay systems,"
 University of Strathclyde, Glasgow, UK, PhD thesis, 2010.


\bibitem{huang2010}
L.~Huang and X.~Mao, ``SMC design for $H_\infty$ control of uncertain stochastic delay systems,"
  \emph{Automatica}, vol. 46,  pp. 405-412, 2010.





\bibitem{huang2015}
L.~Huang, H.~Hjalmarsson and H.~Koeppl, ``Almost sure stability and stabilization of discrete-time stochastic systems,"
  \emph{Syst. Contr. Lett.}, vol. 82,  pp. 26-32, 2015. 

\bibitem{huang2016}
L.~Huang, L.~Pauleve, C.~Zechner, M.~Unger, A.~S. Hansen and H.~Koeppl, ``Reconstructing dynamic molecular states from single-cell time series," \emph{J. R. Soc. Interface}, vol. 13: 20160533, 2016. 


\bibitem{huang2022}
L.~Huang, S.~Xu, ``Impulsive stabilization of  systems with control delay,"
  \emph{IEEE Trans.  Automat. Contr.}, to appear. 

\bibitem{huang_partII}
L.~Huang, ``Stability of stochastic impulsive differential equations:  a theoretic foundation for cyber-physical systems", 
  to be submitted. 


\bibitem{hutzenthaler2012}
M.~Hutzenthaler, A.~Jentzen and P.~E. Kloedenl, ``Strong convergence of an explicit numerical
method for SDEs with non-globally Lipschitz continuous coefficients,"
  \emph{Ann. Appl. Probab.}, vol. 22,  pp. 1611-1641, 2012. 


\bibitem{khas2012}
R.~Khasminskii,   \emph{Stochastic stability of differential equations (2nd edition)},
  Berlin, Gernany: Springer-Verlag, 2012.


\bibitem{kloeden1992}
P.~E.  Kloeden and E.~Platen,  
 \emph{Numerical solution of stochastic differential equations}, Berlin, Germany: Spinger-Verlag, 1992. 

\bibitem{koutsoukos2014}
X.~Koutsoukos and P.~Antsaklis,  
``Design science for cyberphysical systems,"
in \emph{The Impact of Control Technology},  2nd ed., T.~Samad and A.~M. Annaswamy,  Ed.  IEEE Control Systems Society, 2014. Available: {\verb#http://ieeecss.org/sites/ieeecss/files#}  {\verb#/2019-06/IoCT2-RC-Koutsoukos-1.pdf#}, accessed on April 11, 2022.



\bibitem{lee2010}
E.~A. Lee, ``CPS Foundations," 
in \emph{Proc. of 47th IEEE/ACM Design Automat. Conf.}, Chicago, IL, USA, 2010.


\bibitem{lee2015}
E.~A. Lee, ``The past, present and future of cyber-physical systems: a focus on models,"
 \emph{Sensors}, vol. 15,  pp. 4837-4869, 2015.


\bibitem{lee2017}
E.~A. Lee and  S.~A. Seshia,   \emph{Introduction to embedded systems - a cyber-physical systems approach (2nd edition)},
  Massachusetts, US: The M.I.T. Press, 2017.


\bibitem{liu1988}
Y.~Liu, Z.~Song,  \emph{Theory and application of large-scale dynamic systems: decomposition, stability and structure},  Guangzhou, China: South China Univ.  Technol. Press, 1988 (in Chinese).

\bibitem{liu1992}
Y.~Liu and Z.~Feng,  \emph{Theory and application of large-scale dynamic systems: stochastic stability and control},  Guangzhou, China: South China Univ.  Technol. Press, 1992 (in Chinese).






\bibitem{mao2007book}
X.~Mao,  \emph{Stochastic differential equations and applications (2nd edition)}, Chichester, UK: Horwood Publishing, 2007. 


\bibitem{mao2015a}
X.~Mao, ``Almost sure exponential stability in the numerical simulation of stochastic differential equations,"
  \emph{SIAM J. Numer. Anal.}, vol. 53,  pp.370-389, 2015.

\bibitem{mao2015b}
X.~Mao, ``The truncated Euler-Maruyama method for stochastic differential equations,"
  \emph{J. Comput. Appl. Math.}, vol. 290,  pp.370-384, 2015.


\bibitem{nagh2008}
 P.~Naghshtabrizi,  J.~P. Hespanha, A.~R. Teel, ``Exponential stability of impulsive systems with application to uncertain
sampled-data systems,"
  \emph{Syst.   Contr. Lett.}, vol.57,  pp.378-385, 2008.

\bibitem{nghiem2006}
T.~Nghiem, G.~J. Pappas, A.~Girard and R.~Alur, ``Time-triggered implementations of dynamic controllers," in \emph{Proc. of 2013 Intern. Conf.   Embedded Software}, Seoul, Korea, 2006.


\bibitem{oksendal2000}
B.~$\varnothing$ksendal,   \emph{Stochastic differential equations: an introduction with applications (5th edition)}, 
 New York, USA: Springer, 2000. 


\bibitem{peng2010}
S.~Peng and Y.~Zhang, ``Razumikhin-type theorems on $p$th moment
exponential stability of impulsive stochastic
delay differential equations," 
  \emph{IEEE Trans.  Automat. Contr.}, vol. 55, pp. 1917-1922, 2010.

\bibitem{sabanis2013}
S.~Sabanis, ``A note on tamed Euler approximations,"
 \emph{Electron. Commun. Probab.},  vol. 47, pp. 1-10, 2013.

\bibitem{saito1996}
Y.~Saito and T.~Mitsui, ``Stability analysis of numerical schemes for stochastic differential equations,"
  \emph{SIAM J. Numer. Anal.}, vol. 33,  pp. 2254-2267, 1996.

\bibitem{samoilenko1995}
A.~M. Samoilenko and N.~ A. Perestyuk,  
 \emph{Impulsive differential equations},
  Singapore: Word Scientific, 1995.

\bibitem{sarkka2019}
S.~S{\"a}rkk{\"a} and A.~ Solin,  
 \emph{Applied stochastic differential equations},
  Cambridge, UK: Cambridge University Press, 2019.



 
\bibitem{scotton2013}
F.~Scotton, L.~Huang, S.~A. Ahmadi and B.~Wahlberg,  ``Physics-based Modeling and Identification for HVAC Systems," 
in \emph{Proc. of 2013 Europ. Contr. Conf.}, Zurich, Switzerland, 2013.



\bibitem{stuart1996}
A.~M. Stuart and P.~Humphries,   \emph{Dynamical systems and numerical analysis},
  Cambridge, UK: Cambridge Univ. Press, 1996.





\bibitem{tsien2015}
H.~S. Tsien,  \emph{Engineering Cybernetics},
 Shanghai, China: Shanghai Jiaotong Univ.   Press, 2015. 


\bibitem{vaidya2015}
U.~Vaidya, ``Stochastic stability analysis of discrete-time system using Lyapunov measure,"
  \emph{Americ. Contr. Conf.}, Chicago, IL, USA, 2015.


\bibitem{vankampen2007}
N.~G. Van Kampen,  \emph{Stochastic processes in physics and chemistry (3rd edition)},
  Amsterdam, Netherlands: Elsevier, 2007.


\bibitem{wiener1961}
N.~Wiener,   \emph{Cybernetics or control and communication in the animal and the machine (2nd edition)}, 
 Massachusetts, US: The M.I.T. Press, 1961.


\bibitem{wilkinson2012}
D.~Wilkinson,  \emph{Stochastic modelling for systems biology (2nd edition)},  London, UK: CRC Press, 2012.



 

\bibitem{yang2001}
T.~Yang,   \emph{Impulsive control theory},
  Berlin, Gernany: Springer-Verlag, 2001.

 
 




\end{thebibliography}
\end{document}